\title{The Sch\"{u}tzenberger product for Syntactic Spaces\footnote{This project has received funding from the European Research Council (ERC) under the European Union's Horizon 2020 research and innovation programme (grant agreement No.670624).}}
\author{Mai Gehrke} \author{Daniela Petri{\c s}an} \author{Luca
  Reggio}
\affil{IRIF, CNRS and Univ. Paris Diderot, France\\
  \texttt{\{mgehrke,petrisan,reggio\}@liafa.univ-paris-diderot.fr}}
\authorrunning{M.\ Gehrke, D.\ Petri{\c s}an and L.\ Reggio} 
\subjclass{F. Theory of Computation;
F.1.1 Models of Computation;
F.4.1 Mathematical Logic;
F.4.3 Formal Languages 
}
\keywords{Stone duality and Stone-\v Cech compactification, semantics and coalgebraic logic, logic on words, algebraic language theory beyond the regular setting.}
\begin{document}
\maketitle
%
\begin{abstract}
  Starting from Boolean algebras of languages closed under quotients
  and using duality theoretic insights, we derive the notion of
  \emph{Boolean spaces with internal monoids} as recognisers for
  arbitrary formal languages of finite words over finite
  alphabets. This leads to recognisers and syntactic spaces equivalent
  to those proposed in \cite{GGP2010}, albeit in a setting that is
  well-suited for applying existing tools from Stone duality as
  applied in semantics.

  The main focus of the paper is the development of topo-algebraic
  constructions pertinent to the treatment of languages given by logic
  formulas. In particular, using the standard semantic view of
  quantification as projection, we derive a notion of
  \emph{Sch\"{u}tzenberger product} for Boolean spaces with internal
  monoids.  This makes heavy use of the Vietoris construction --- and
  its dual functor --- which is central to the coalgebraic treatment
  of classical modal logic.

  We show that the unary Sch\"{u}tzenberger product for spaces yields
  a recogniser for the language of all models of the formula $\exists
  x.\Phi(x)$, when applied to a recogniser for the language of all
  models of $\Phi(x)$. Further, we generalise global and local
  versions of the theorems of Sch\"{u}tzenberger and Reutenauer
  characterising the languages recognised by the binary
  Sch\"{u}tzenberger product.
  Finally, we provide an equational characterisation of Boolean
  algebras obtained by local Sch\"{u}tzenberger product with the one
  element space based on an Egli-Milner type condition on generalised
  factorisations of ultrafilters on words.
\end{abstract}
\section{Introduction}
This contribution lies at the interface of two distinct areas: One in
semantics concerned with modelling binding of variables, and the other
in the theory of formal languages and the search for separation
results for complexity classes based on a generalisation of the
algebraic theory of regular languages \cite{Straubing1994,KLR2007}.
In semantics of propositional and modal logics, Stone duality and
coalgebraic logic have had great success, but in the presence of
quantifiers more general categorical semantics is required.
Quantifiers change the set of free variables in a formula, leading to
a notion of indexing formulas by their contexts of free variables. In
the theory of regular languages, classes of models indexed by finite
alphabets have long been studied in the form of varieties of
languages~\cite{Eilenberg2}. There, one considers Boolean algebras
of languages closed under quotients over a category of finite
alphabets with monoid morphisms between the corresponding finitely
generated monoids. This paper is intended as a first step towards
establishing a connection between categorical semantics of logics and
fibrational approaches in language theory.

We follow the line set by \cite{GGP2008, GGP2010} and \cite{GKP2016},
which exploits the connection between the algebraic theory of formal
languages and Stone duality, see
also~\cite{BoBoHaPaSi2014,AdamekMUM15}.  In this paper we are
interested in the effect that first-order quantifiers have at the
level of the algebraic recognisers.
%
%
%
This is well understood in the regular case,
where a plethora of powerful tools, in the form of Sch\"{u}tzenberger,
Maltsev, and block products of finite (and profinite) monoids, is
used.
Beyond the regular setting, we take as a departure point classes of
languages equipped with actions of the free monoid over a finite set
and the standard view of existential quantification as projection, and we
derive --- via Stone duality --- our notion of recognisers and of unary
Sch\"{u}tzenberger product.
%
%
Our analysis arrives at an extension of the Sch\"{u}tzenberger 
product, which was originally introduced in~\cite{Schutzenberger65} as a means 
of studying the concatenation product of regular languages and was
further extended in~\cite{Straubing1981} and~\cite{Pin2003} to arbitrary 
arity and to ordered monoids, respectively. 
Reutenauer~\cite{Reutenauer1979}, and Pin~\cite{Pin83} in the ordered 
setting, have provided exact characterisations of the regular languages 
accepted by the Sch\"{u}tzenberger product.

In the setting of regular languages equations have played an essential
r\^ole in providing decidability results for varieties of languages
and various generalisations thereof. For classes of arbitrary
languages decidability is not to be expected and separation of classes
is the main focus. For this reason soundness becomes more important
than completeness per se.  However, complete axiomatisations are
useful for obtaining decidability results for the class of regular
languages within a fragment. See \cite{GKP2016} for an example and for
further motivation relative to the study of circuit complexity
classes.

\textbf{Contributions and Structure.}
 After some preliminaries on Stone duality and actions by monoids, 
 Section~\ref{s:recognition-spaces-dense-monoids} introduces our 
 notion of recognisers and main objects of study, the
 \emph{Boolean spaces with internal monoids}. 
 In Section~\ref{s:unary-schutz-product} we analyse the relation between
 recognisers for a language $L_\Phi$, corresponding to a formula $\Phi$
 with one free first-order variable $x$, and recognisers for the
 existentially quantified language $L_{\exists x.\Phi}$.  To this end,
 in Section~\ref{ss:logical-motivation} we introduce a unary version
 of the Sch\"{u}tzenberger product, $\Ds M$, for a discrete monoid $M$ 
 and prove that if $M$ recognises $L_\Phi$, then $\Ds M$ recognises
 $L_{\exists x.\Phi}$. In Section~\ref{ss:unary-schutz} we extend the 
 unary Sch\"{u}tzenberger product, and the results in 
 Section~\ref{ss:logical-motivation}, to Boolean spaces with internal
 monoids (noting this can be done for semigroups as well). We end the 
 section with a characterisation of the languages recognised by the 
 unary Sch\"{u}tzenberger product $(\Ds X, \Ds S)$ of a Boolean space 
 with an internal semigroup $(X, S)$ (see 
 Theorem~\ref{th:recognised-by-diamond-X}). 
 In Section~\ref{s:binary-schutz-product} we introduce the binary 
 Sch\"{u}tzenberger product of Boolean spaces with internal monoids. 
 Theorems~\ref{th:reutenauer-global} and~\ref{th:reutenauer-local} 
 extend results of Reutenauer in the regular setting and establish the 
 connection with concatenation product for arbitrary languages.
Finally, in Section~\ref{s:ultrafilter-equations} we provide a completeness 
result for the Boolean algebra recognised by the local version of the 
Sch\"utzenberger product of a space with the one element space.\\[-3ex]
\section{Preliminaries}\label{s:preliminaries}
\subsection{Stone duality for Boolean algebras}\label{ss:stone-duality}
Let $(\B,\wedge,\vee,\neg,0,1)$ be a Boolean algebra. Recall that a subset $\UU\subseteq \B$ is a \emph{filter} of $\B$ if it satisfies the following conditions:
\begin{itemize}
\item non-emptiness: $1\in\UU$,
\item upward closure: if $L\in \UU$ and $N\in \B$ satisfies $L\leq N$, then $N\in \UU$,
\item closure under finite meets: if $L,N\in \UU$, then $L\wedge N\in\UU$.
\end{itemize}
A filter $\UU\subseteq\B$ is \emph{proper} if $\UU\neq\B$. \emph{Ultrafilters} are those for which $L\in\UU$ or $\neg L\in\UU$ for each $L\in\B$.
%
In the Boolean algebra $\P(S)$, an example of an ultrafilter is given, for each $s\in S$, by the \emph{principal ultrafilter} associated with the element $s$, 
namely\footnote{Identifying $s\in S$ with $\{s\}\in\P(S)$, we write $\uparrow s$ for $\uparrow \{s\}$.}
\vskip-.7cm
\begin{align}\label{eq:principal-ultrafilter-power-set}
\uparrow s:=\{b\in\P(S)\mid s\in b\}.
\end{align}
\vskip-.2cm
Let $X_{\B}$ be the collection of all the ultrafilters of $\B$. The fundamental insight of Stone is that, equipped with an appropriate topology, one may recover $\B$ from $X_{\B}$. For $L\in\B$ set
\vskip-.7cm
\begin{align}\label{eq:basic-clopens}
\widehat{L}:=\{\UU\in X_{\B}\mid L\in\UU\}.
\end{align}
\vskip-.2cm
Then the family $\{\widehat{L}\mid L\in\B\}$ forms a basis of open sets for a topology $\sigma$ on $X_{\B}$, and the topological space $(X_{\B}, \sigma)$ is called the
 \emph{dual space} of the Boolean algebra $\B$. The topology $\sigma$ is compact, Hausdorff, and admits a basis of \emph{clopen} sets (i.e.\ sets that are 
 both open and closed) since the complement of $\widehat{L}$ is $\widehat{\neg L}$.
Compact Hausdorff spaces that admit a basis of clopen sets are known as \emph{Boolean} (or \emph{Stone}) \emph{spaces}. The collection of clopens of a Boolean space $X$ (equipped with set-theoretic operations) constitutes a Boolean algebra, known as the \emph{dual algebra} of $X$. 
These processes are, up to natural equivalence, inverse to each other.
Given a morphism of Boolean algebras $h\colon\A\to\B$, the inverse image map on their power sets $h^{-1}\colon\P(\B)\to\P(\A)$ sends ultrafilters to ultrafilters and 
provides the continuous map from the dual space of $\B$ to the dual space of $\A$. Similarly, the inverse image map of a continuous map $f\colon X\to Y$ provides the morphism from the dual algebra of $Y$ to that of $X$. 
In this correspondence, quotient algebras correspond to embeddings as (closed) subspaces, and inclusions as subalgebras correspond to quotient spaces. 
%
In category-theoretic terms, this establishes a contravariant equivalence between the category of Boolean spaces and continuous maps, and the category of Boolean algebras and their morphisms. This
is the content of the celebrated Stone duality for Boolean algebras \cite[Theorems 67 and 68]{Stone1936}.
%
%

We end this section with an example of a Boolean algebra and its dual space which will play a key r\^{o}le in the sequel. Let $S$ be a set. Then $\P(S)$ is a 
Boolean algebra and its dual space, denoted by $\beta(S)$, is known as the \emph{Stone-\v{C}ech compactification} of the set $S$.
We remark that the map $\iota\colon S\to \beta(S)$, mapping an element $s$ to the principal ultrafilter $\uparrow s$ of~\eqref{eq:principal-ultrafilter-power-set}, 
is injective and embeds $S$, with the discrete topology, as a dense subspace of $\beta(S)$.
%
%
Henceforth, we will consider $S$ as a subspace of $\beta(S)$, identifying $s\in S$ with $\uparrow s$, thus suppressing the embedding $\iota$.
The space $\beta(S)$ is characterised by the following \emph{universal property}: if $X$ is a compact Hausdorff space and $f\colon S\to X$ is any function, 
then there is a (unique) continuous function $g\colon \beta(S)\to X$ such that the following diagram commutes.
\begin{equation}\label{eq:stone-cech-universal-property}
\begin{tikzcd}
S \arrow[hookrightarrow]{r}{} \arrow{dr}[swap]{f} & \beta(S) \arrow{d}{g} \\
 & X
\end{tikzcd}
\end{equation}
Consequently, if $T$ is a discrete space, any function $f\colon S\to T$ can be extended to a continuous map $\beta(f)\colon\beta(S)\to\beta(T)$. 
Explicitly, the latter is given, for each $\UU\in\beta(S)$ and $L\in\P(T)$, by
\begin{align}\label{eq:beta-on-maps}
L\in\beta(f)(\UU) \quad \text{if, and only if,} \quad f^{-1}(L)\in\UU.
\end{align}
%
%
%
\subsection{Monoid actions}\label{ss:monoid-actions}
Let $(M,\cdot,1)$ be a monoid, and $X$ be a set. A function $\lambda\colon M\times X\to X$ is called a \emph{left action} of $M$ on $X$ provided
\begin{itemize}
\item for all $x\in X$, $\lambda(1,x)=x$,
\item for all $m,m'\in M$ and $x\in X$, $\lambda(m\cdot m',x)=\lambda(m,\lambda(m',x))$.
\end{itemize}
\vskip.1cm
Similarly, one can define a \emph{right action} $\rho\colon X\times M\to X$ of $M$ on $X$.
For each $m\in M$, we refer to the function $\lambda_m\colon X\to X$ given by $\lambda_m(x):=\lambda(m,x)$ (respectively to the function $\rho_m\colon X\to X$ given by $\rho_m(x):=\rho(x,m)$) as the \emph{component} of the action $\lambda$ at $m$ (respectively, of the action $\rho$ at $m$). A pair consisting of left and right 
actions $\lambda,\rho$ of $M$ on $X$ is said to be \emph{compatible} if, for all $m,m'\in M$, $\lambda_{m}\circ\rho_{m'}=\rho_{m'}\circ\lambda_{m}$. 
We call such a pair of compatible actions  a \emph{biaction} of $M$ on $X$ (or an \emph{$M$-biaction} on $X$).
\vspace*{-.2cm}
\begin{example}\label{ex:monoid-acting-on-itself}
Any monoid $M$ can be seen as acting on itself on the left and on the right. The component of the left action at $m\in M$ is the multiplication on the left by $m$, and the 
component of the right action is the multiplication on the right by $m$. The compatibility of the two actions amounts precisely to the associativity of the monoid operation.
\end{example}
\vspace*{-.4cm}
%
%
\begin{example}\label{ex:N}
Consider $\nbb$, the free monoid on one generator. As observed in Example \ref{ex:monoid-acting-on-itself}, for each $n\in\nbb$ we have components
$\lambda_n,\rho_n\colon \nbb\to\nbb$ of compatible left and right actions of $\nbb$ on itself.
By the universal property~\eqref{eq:stone-cech-universal-property} of the Stone-\v{C}ech compactification, we obtain continuous components 
$\beta(\lambda_n),\beta(\rho_n)\colon\beta(\nbb)\to\beta(\nbb)$ of a biaction of $\nbb$ on $\beta(\nbb)$.
However the set $\beta(\nbb)$ is not equipped with a continuous monoid operation, see \cite[Chapter 4]{HS2012}.
\end{example}
\section{Recognition by spaces with dense monoids}\label{s:recognition-spaces-dense-monoids}
We start by showing how our main objects of study (see Definition
\ref{d:spaces-with-internal-monoids} below) arise naturally by
considering duals of Boolean algebras of languages closed under
certain operations known as quotients by words.
%
Let $\Alp$ be a finite alphabet. Instantiating the monoid in
Example~\ref{ex:monoid-acting-on-itself} with the free monoid $\Alp^*$
on $\Alp$, we obtain a biaction of $\Alp^*$ on itself.
The components of the left and right actions are given by
concatenation, and they will be denoted by
\vskip-.7cm
\[
  \lambda_w\colon \Alp^*\to\Alp^*, \, \, u\mapsto wu \quad
  \text{and} \quad \rho_w\colon \Alp^*\to\Alp^*, \, \,
  u\mapsto uw.
\]
\vskip-.2cm
These actions can be dualised from $\Alp^*$ to
$\P(\Alp^*)$. The right $\Alp^*$-action on $\P(\Alp^*)$ is given by
$\lambda_w^{-1}\colon\P(\Alp^*)\to\P(\Alp^*)$, while the left action
is given by $\rho_w^{-1}\colon\P(\Alp^*)\to\P(\Alp^*)$. These are the
well-known \emph{left quotients} and \emph{right quotients} of
language theory given, respectively, by
\vskip-.7cm
\begin{align*}
  L\mapsto \{u \mid wu\in L\}=:w^{-1}L \quad \text{and}\quad  L\mapsto \{u \mid uw\in L\}=:Lw^{-1}.
\end{align*}
\vskip-.2cm
It is immediate that the $\lambda_w^{-1}$ and $\rho_w^{-1}$ are
homomorphisms and compatible $\Alp^*$-actions. 

Dualising again, we see that the space $\beta(\Alp^*)$ is equipped
with (compatible and continuous) left and right $\Alp^*$-actions
given, for all $w\in\Alp^*$, by $\beta(\lambda_w)$ and
$\beta(\rho_w)$, respectively. By abuse of notation and for ease of
readability, we will denote these actions again by $\lambda_w$,
respectively $\rho_w$. We notice that the pair $(\beta(\Alp^*),\Alp^*)$ 
exhibits the following structure:
\vspace*{.1cm}
  \begin{itemize}
  \item a Boolean space $\beta(\Alp^*)$,
  \item a dense subspace $\Alp^*$ equipped with a monoid structure,
  \item a biaction of $\Alp^*$ on $\beta(\Alp^*)$ with continuous components
  extending that of $\Alp^*$ on itself. 
\end{itemize}

Now, consider a Boolean subalgebra $\B$ of $\P(\Alp^*)$ closed 
under left and right quotients by words. Then the maps 
$\lambda_w^{-1}$ and $\rho_w^{-1}$ restrict to Boolean algebra 
morphisms on $\B$, yielding the following commutative diagrams.
\begin{equation}
  \label{eq:actions-on-boolean-algebra}
  \begin{tikzcd} 
    \P(\Alp^*)\arrow{r}{\lambda_w^{-1}} & \P(\Alp^*) & & \P(\Alp^*)\arrow{r}{\rho_w^{-1}}& \P(\Alp^*)  \\
    \B\arrow[hookrightarrow]{u}\arrow[dashed]{r}{\lambda_w^{-1}} &
    \B\arrow[hookrightarrow]{u} & & \B\arrow[dashed]{r}{\rho_w^{-1}}
    \arrow[hookrightarrow]{u} & \B\arrow[hookrightarrow]{u}
  \end{tikzcd}
\end{equation}
\vskip-.2cm
\noindent Let $X_\B$ denote the dual space of the Boolean algebra $\B$. The
embedding $\B\hookrightarrow\P(\Alp^*)$ dually corresponds to a quotient 
 $\tau\colon\beta(\Alp^*)\epi X_\B$.
%
%
%
The space $X_\B$ also admits left and right $\Alp^*$-actions induced
by the duals of the maps $\lambda_w^{-1}$, respectively $\rho_w^{-1}$,
from~\eqref{eq:actions-on-boolean-algebra}. We thus obtain
%
\vskip-.5cm
\begin{equation}
  \label{eq:actions-on-dual-space}
  \begin{tikzcd}
    \beta(\Alp^*)\arrow{r}{\lambda_w}\arrow[twoheadrightarrow]{d}[swap]{\tau}
    & \beta(\Alp^*)\arrow[twoheadrightarrow]{d}{\tau} & &
    \beta(\Alp^*)\arrow{r}{\rho_w}
    \arrow[twoheadrightarrow]{d}[swap]{\tau} & \beta(\Alp^*)\arrow[twoheadrightarrow]{d}{\tau} \\
    X_\B\arrow[dashed]{r}{\lambda_w} & X_\B & & X_\B\arrow[dashed]{r}{\rho_w}& X_\B
  \end{tikzcd}
\end{equation}
%
  Then $M:=\tau[\Alp^*]$ is a dense subspace of $X_{\B}$, and we have the
  following commutative diagram.
  \vskip-.5cm
  \begin{equation}
    \label{eq:dense-monoid-in-syntactic-space}
    \begin{tikzcd} 
      \beta(\Alp^*) \arrow[twoheadrightarrow]{r}{\tau} & X_{\B} \\
      \Alp^* \arrow[twoheadrightarrow]{r}{\tau}
      \arrow[hookrightarrow]{u}{} & M
      \arrow[hookrightarrow]{u}[swap]{}
    \end{tikzcd}
  \end{equation}
%
  We observe that the pair $(X_{\B},M)$ exhibits the same kind of
  structure as $(\beta(\Alp^*),\Alp^*)$:
  \begin{itemize}
  \item a Boolean space $X_{\B}$,
  \item a dense subspace $M$ equipped with a monoid structure,
  \item a biaction of $M$ on $X_{\B}$ with continuous components
  extending the biaction of $M$ on itself. 
  \end{itemize}
  \vskip.1cm
\noindent  Indeed, recall that $X_{\B}$ is equipped with left and right
  $\Alp^*$-actions which are preserved by the map $\tau$ by
  commutativity of~\eqref{eq:actions-on-dual-space}.
 The $\Alp^*$-actions on $X_{\B}$ restrict to $\Alp^*$-actions on $M$, 
 which are preserved by the restriction of $\tau$.
  The monoid structure on $M$ is then defined as follows. For any
  $m\in M$ pick $w_{m}\in\Alp^*$ satisfying $\tau(w_{m})=m$. Such an
  element exists because $M$ is the image of $\Alp^*$ by $\tau$. For $m,m'\in M$,
  set $m\cdot m':=\lambda_{w_m}(m').$
  It is easily seen that the latter operation is well-defined and
  provides a monoid structure on $M$ which makes the restriction of $\tau$ a monoid
  morphism.

  As first introduced in \cite{GGP2010}, we will be using dual spaces
  equipped with actions as recognisers. The examples above motivate
  the following definition.

\begin{definition} \label{d:spaces-with-internal-monoids} A
  \emph{Boolean space with an internal monoid} is a pair $(X,M)$
  consisting of
  \begin{itemize}
  \item a Boolean space $X$,
  \item a dense subspace $M$ equipped with a monoid structure,
  \item a biaction of $M$ on $X$ with continuous components
  extending the biaction of $M$ on itself. 
  \end{itemize}
\end{definition}
%

%

\begin{remark}
\label{remark-pervin} 
The recognisers introduced in \cite{GGP2010} are 
monoids equipped with a uniform space structure, namely the Pervin 
uniformity given by a Boolean algebra of subsets of the monoid, so that 
the biaction of the monoid on itself has uniformly continuous components.
Such an object was called a \emph{semiuniform monoid}. One may show 
that the completion of a semiuniform monoid is a Boolean space with an
internal monoid. Conversely, given a Boolean space with an internal monoid
$(X,M)$, the Pervin uniformity on $M$ induced by the dual of $X$ is a 
semiuniform monoid, and these two constructions are inverse to each other.

%
\end{remark}
%
%
We are interested in maps between pairs $(X,M)$ and $(Y,N)$, i.e.\
continuous maps $X\to Y$ which preserve the additional structure.
%
\begin{definition} \label{d:morphisms-of-spaces-with-internal-monoids}
  A \emph{morphism} between two Boolean spaces with internal monoids
  $(X,M)$ and $(Y,N)$ is a continuous map $f\colon X\to Y$ such that
  $f$ restricts to a monoid morphism $M\to N$.
\end{definition}
Morphisms, as just defined, are in fact also biaction-preserving maps.
%
\begin{lemma}\label{l:morphism-are-action-preserving}
  Let $f\colon(X,M)\to(Y,N)$ be a morphism of Boolean spaces with
  internal monoids. Then $f$ preserves the actions, i.e.\
for every $m\in M$
\vspace*{-.2cm}
  \begin{align*}
    f\circ \lambda_{m}=\lambda_{f(m)}\circ f \quad\text{and}\quad  f\circ \rho_{m}=\rho_{f(m)}\circ f.
  \end{align*}
\end{lemma}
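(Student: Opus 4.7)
The plan is to reduce the two desired identities to statements that hold on the dense subspace $M\subseteq X$, and then extend them to all of $X$ using continuity and the Hausdorff property of $Y$.

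First I would observe that on elements of $M$ the identity is essentially immediate. Indeed, the definition of a Boolean space with an internal monoid requires that the biaction of $M$ on $X$ extend the biaction of $M$ on itself (which is just left/right multiplication, by Example~\ref{ex:monoid-acting-on-itself}). So for any $m\in M$ and any $m'\in M$, we have $\lambda_{m}(m')=m\cdot m'$ inside $M$, and similarly $\lambda_{f(m)}(f(m'))=f(m)\cdot f(m')$ inside $N$. Since $f$ restricts to a monoid morphism $M\to N$, it follows that
\[
f(\lambda_{m}(m'))=f(m\cdot m')=f(m)\cdot f(m')=\lambda_{f(m)}(f(m')).
\]
The analogous computation works for $\rho$. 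Hence the two maps $f\circ\lambda_{m}$ and $\lambda_{f(m)}\circ f$ (both viewed as maps $X\to Y$) coincide on $M$, and likewise for $\rho$.

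Next I would extend this equality from $M$ to all of $X$. Both $f\circ\lambda_{m}$ and $\lambda_{f(m)}\circ f$ are continuous maps $X\to Y$, being composites of continuous maps (continuity of the action components is part of the definition, and $f$ is continuous by assumption). Since $M$ is dense in $X$ and $Y$ is Hausdorff (being a Boolean space), two continuous maps $X\to Y$ that agree on the dense subset $M$ must agree on all of $X$. This yields $f\circ\lambda_{m}=\lambda_{f(m)}\circ f$, and the same argument applied to right actions gives $f\circ\rho_{m}=\rho_{f(m)}\circ f$.

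There is no real obstacle here; the only point worth double-checking is that the action components on the codomain are indeed continuous so that $\lambda_{f(m)}\circ f$ is continuous, and that the biaction on $X$ genuinely extends monoid multiplication on $M$, both of which are part of Definition~\ref{d:spaces-with-internal-monoids}. The argument is really just the standard ``continuous maps into Hausdorff spaces agreeing on a dense subset'' principle, combined with the defining property of an internal monoid.
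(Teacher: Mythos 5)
Your proof is correct and follows essentially the same route as the paper's: verify the identity on $M$ using the fact that $f$ restricts to a monoid morphism and that the actions extend the multiplication, then conclude by density of $M$ in $X$ together with continuity of both composites and the Hausdorff property of $Y$. No gaps.
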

\begin{example}\label{ex:dense-monoid-in-X}
  The map $\tau\colon(\beta(\Alp^*),\Alp^*)\to (X_{\B},M)$
  of~\eqref{eq:dense-monoid-in-syntactic-space} is a morphism of
  Boolean spaces with internal monoids.
\end{example}

\begin{remark}
The map $L\mapsto\wL$ of~\eqref{eq:basic-clopens} establishes a one-to-one
correspondence between the elements of $\P(\Alp^*)$ and the 
clopens of $\beta(\Alp^*)$. 
Thus, we will sometimes blur the distinction between recognition of a
language $L$ and recognition of the corresponding clopen $\wL$.
\end{remark}

\begin{definition}
 \label{def:recognition}
  Let $\Alp$ be a finite alphabet, and let $L\subseteq\P(\Alp^*)$
  be a language.
  We say that $L$ (or $\wL$) is \emph{recognised by the morphism}
  $f\colon(\beta(\Alp^*),\Alp^*)\to(X,M)$ if there is a clopen
  $C\subseteq X$ such that $\wL=f^{-1}(C)$.
  Moreover, the language $L$ is \emph{recognised by the
    space} $(X,M)$ if there is a
  morphism $(\beta(\Alp^*),\Alp^*)\to(X,M)$ recognising $L$.
 Similarly, we say that a morphism (or a space) recognises a 
 Boolean algebra if it recognises all its elements. 
\end{definition}
\begin{remark}
  In general, a morphism $(\beta(\Alp^*),\Alp^*)\to (X,M)$ with
  \emph{infinite} $M$, recognises (in the sense of
  Definition~\ref{def:recognition}) far less languages than the
  induced monoid morphism $\Alp^*\to M$.  On the other hand, a finite
  monoid $M$ may be seen as a space with an internal monoid, in which
  the space component is the monoid itself, equipped with the
  discrete topology. A morphism $(\beta(\Alp^*),\Alp^*) \to(M,M)$
  yields in particular a monoid morphism $\Alp^*\to M$.  Conversely, a
  monoid morphism $h\colon\Alp^*\to M$ extends uniquely to a
  continuous map $\beta h\colon\beta(\Alp^*)\to M$ whose restriction
  to $\Alp^*$ is a monoid morphism. Thus the notion of recognition
  introduced here extends the usual notion for regular languages, but
  is finer-grained in the non-regular setting.
\end{remark}

\section{A unary variant of the Sch\"{u}tzenberger product}\label{s:unary-schutz-product}

\subsection{Logical motivation: existentially quantified languages}
\label{ss:logical-motivation}

Consider the free monoid $\Alp^*$ over a finite alphabet $\Alp$.
A word $w\in\Alp^*$ may be seen as a structure based on the set
$\{0,\ldots,|w|-1\}$,\footnote{Here, as usual, $|w|\in\nbb$ denotes
  the length of the word $w=w_0\cdots w_{|w|-1}\in\Alp^*$.}  equipped
minimally with a unary predicate for each letter $a\in\Alp$, which
holds at $i$ if and only if $w_i=a$. Now given a formula $\Phi$ (in a
language interpretable over words as structures), assumed for
simplicity to have only one free first-order variable $x$, we will see
the set $L_\Phi$ of all words satisfying $\Phi$ as a language in the
extended alphabet $\Atw$.
In the terminology of~\cite{Straubing1994}, 
$L_\Phi$ consists of $\{x\}$-structures, which correspond to words in the 
subset $(\Sigma\times\{0\})^*(\Sigma\times\{1\})(\Sigma\times\{0\})^*$ of 
the free monoid $(\Atw)^*$. An $\{x\}$-structure satisfies $\Phi$ provided 
the underlying word in the alphabet $\Alp$ satisfies $\Phi$ under the 
interpretation in which $x$ points to the unique position marked with a $1$. 
Notice that $(\Sigma\times\{0\})^*(\Sigma\times\{1\})(\Sigma\times\{0\})^*$ is
isomorphic to the set $\Alp^*\otimes\nbb$ of words in $\Alp^*$ with a
marked spot defined by
\begin{align*}
\Alp^*\otimes\nbb:=\{(w,i)\in \Alp^*\times\nbb\mid i<|w|\}.
\end{align*}
Throughout this section we will make use of the following three maps
\begin{align*}
&  \gamma_0\colon\Alp^*\to(\Atw)^*, & 
&  \gamma_1\colon\Alp^*\otimes\nbb\to(\Atw)^*, &
&  \pi\colon\Alp^*\otimes\nbb\to\Alp^*.
\end{align*}
\begin{itemize}
\item The map $\gamma_0\colon\Alp^*\to (\Atw)^*$ is the embedding given
by $w\mapsto w^0$, where $w^0$ has the same length as $w$ and
\[
(w^0)_j:=(w_j,0)\quad \text{for each} \quad j<|w|.
\]
\item The map $\gamma_1\colon\Alp^*\otimes\nbb\to (\Atw)^*$ is the embedding
given by
$
(w,i)\mapsto w^{(i)},
$
where $w^{(i)}$ has the same length as $w$ and 
\[
(w^{(i)})_j:=\left\{\begin{array}{rcl}
    (w_j,0)&\text{if}&i\neq j<|w|\\
    (w_i,1)&\text{if}&i=j.
  \end{array}
\right.
\]
\item The map $\pi\colon\Alp^*\otimes\nbb\to \Alp^*$ is the projection on the
first coordinate.
\end{itemize}

\begin{remark} 
  The language $L_{\exists x.\Phi}$ is obtained as
  $\pi[\gamma_1^{-1}(L_\Phi)]$. More generally, given a language
  $L\subseteq(\Atw)^*$, we shall denote $\pi[\gamma_1^{-1}(L)]\subseteq \Alp^*$ by $L_\exists$.
\end{remark}

\begin{remark}
Notice that, unlike $\gamma_0$, the maps $\gamma_1$ and $\pi$ are not
monoid morphisms. Indeed, $\Alp^*\otimes\nbb$ does not have a suitable
monoid structure. However, $\Alp^*\otimes\nbb$ does have a
$\Alp^*$-biaction structure. For $v\in\Alp^*$, the components of the
left and right actions are given by
\begin{align*}
& \lambda_v(w,i):=(vw,i+|v|),\\
& \rho_v(w,i):=(wv,i).
\end{align*}
It is clear that both $\gamma_1$ and $\pi$ preserve the
$\Alp^*$-actions.
\end{remark}

Assume that the language $L_\Phi$ is recognised by a monoid morphism
$\tau\colon(\Atw)^*\to M$.  We have the following pair of
functions\footnote{Notice that this is not a relational morphism in
  the sense of Tilson's definition given in~\cite{Eilenberg2}, since the domain $\Alp^*\otimes\nbb$ does not
  have a compatible monoid structure.} with domain $\Alp^*\otimes\nbb$
\begin{equation*}\label{eq:span}\begin{tikzcd}[row sep=0.5em,column sep=3em]
 & \Alp^*\otimes\nbb \arrow{dl}[swap]{\pi} \arrow{dr}{\gamma_1}& & \\
 \Alp^* & & (\Atw)^* \arrow{dr}{\tau} & \\
 & & & M
\end{tikzcd}\end{equation*}
which gives rise to a relation $R\colon\Alp^*\nrightarrow M$ given by
\[
(w,m)\in R \quad \text{if, and only if,}\quad \exists
(w,i)\in\pi^{-1}(w).\ (\tau\circ\gamma_1)(w,i)=m.
\]
Though $\pi$ is not injective, it does have \emph{finite preimages}. 
As will be crucial in what follows, this allows us to represent $R$ as a 
function (which, in general, is not a monoid morphism)
\begin{align}
  \label{eq:rel-PM}
  \xi_1\colon\Alp^* \to\Pfin(M),\quad w\mapsto \{\tau(w^{(i)})\mid 0\le i< |w|\}
\end{align}
where $\Pfin(M)$ denotes the set of finite subsets of $M$.  Consider
the monoid structure on $\Pfin(M)$ with union as the multiplication,
and the empty set as unit. Notice that the monoid $M$
acts on $\Pfin(M)$ both to the left and to the right, and the two
actions are compatible. The left action $M\times\Pfin(M)\to \Pfin(M)$
is given, for $m\in M$ and $S\in\Pfin(M)$, by $m\cdot S:=\{m\cdot s\mid s\in
S\}$. Similarly, the right action is given by $S\cdot m:=\{s\cdot m\mid s\in
S\}$.
\begin{definition}\label{def:unary-schutz-monoid}
  We define the \emph{unary Sch\"{u}tzenberger product} $\Ds M$ of $M$ as the
  bilateral semidirect product $\Pfin(M)*M$ of the monoids
  $(\Pfin(M),\cup)$ and $(M,\cdot)$. Explicitly, the underlying set of
  this monoid is the Cartesian product $\Pfin(M)\times M$, and the
  multiplication $*$ on $\Pfin(M)*M$ is given by
\begin{equation*}
  \label{eq:1}
  (S,m)*(T,n):=(S\cdot n\cup m\cdot T,m\cdot n).
\end{equation*}
\end{definition}
Note that the projection onto the second coordinate, 
$\pi_2\colon\Ds M\to M$, is a monoid morphism.

\begin{proposition}
  \label{prop:monoid-blprod}
  If $\tau\colon(\Atw)^*\to M$ is a monoid morphism recognising $L_\Phi$,
  then there exists a monoid morphism
\begin{equation*}
  \label{eq:rec-mon}
  \xi\colon\Alp^*\to \Ds M
\end{equation*}
that recognises the language
$L_{\exists x.\Phi}$ and makes the following diagram commute.
\begin{equation*} 
  \label{eq:commdiag} 
  \begin{tikzcd}
    \Alp^* \arrow{r}{\xi} \arrow{d}[swap]{\gamma_0} & \Ds M \arrow{d}{\pi_2} \\
    (\Atw)^* \arrow{r}{\tau} & M
  \end{tikzcd} 
\end{equation*}
\end{proposition}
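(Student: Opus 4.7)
The plan is to define the candidate morphism explicitly, verify it is multiplicative via a direct combinatorial decomposition of the marked word $(uv)^{(i)}$, and then exhibit an explicit clopen of $\Ds M$ witnessing the recognition of $L_{\exists x.\Phi}$.

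\smallskip

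\noindent\textbf{The candidate.} Building on the function $\xi_1$ from \eqref{eq:rel-PM}, I would set
\[
\xi\colon\Alp^*\to \Ds M,\qquad \xi(w):=\bigl(\xi_1(w),\ \tau(w^0)\bigr)=\bigl(\{\tau(w^{(i)})\mid 0\le i<|w|\},\ \tau(\gamma_0(w))\bigr).
\]
Commutativity of the square is then immediate from the definition of $\xi$ and of the projection $\pi_2$, since $\pi_2(\xi(w))=\tau(w^{0})=\tau(\gamma_0(w))$.

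\smallskip

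\noindent\textbf{Multiplicativity.} For the empty word we have $\xi_1(\varepsilon)=\emptyset$ and $\tau(\varepsilon^0)=1_M$, so $\xi(\varepsilon)=(\emptyset,1_M)$ is the unit of $\Ds M$. For the product, given $u,v\in\Alp^*$, the key observation is the decomposition of the marked word $(uv)^{(i)}$ in $(\Atw)^*$:
\[
(uv)^{(i)}=\begin{cases} u^{(i)}\cdot v^{0} & \text{if } 0\le i<|u|,\\ u^{0}\cdot v^{(i-|u|)} & \text{if } |u|\le i<|uv|.\end{cases}
\]
Applying the monoid morphism $\tau$ and taking the union over all marks splits $\xi_1(uv)$ as $\xi_1(u)\cdot\tau(v^{0})\,\cup\,\tau(u^{0})\cdot\xi_1(v)$. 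Together with $\tau((uv)^{0})=\tau(u^{0})\cdot\tau(v^{0})$, this is exactly the product $(\xi_1(u),\tau(u^{0}))*(\xi_1(v),\tau(v^{0}))$ in $\Pfin(M)*M$, so $\xi(uv)=\xi(u)*\xi(v)$.

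\smallskip

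\noindent\textbf{Recognition of $L_{\exists x.\Phi}$.} Since $\tau$ recognises $L_\Phi$, fix $P\subseteq M$ with $L_\Phi=\tau^{-1}(P)$. I would take
\[
C:=\{(S,m)\in\Ds M\mid S\cap P\neq\emptyset\}\subseteq \Ds M.
\]
Unwinding the definitions, $w\in L_{\exists x.\Phi}=\pi[\gamma_1^{-1}(L_\Phi)]$ iff there is some $i<|w|$ with $w^{(i)}\in L_\Phi$, iff $\xi_1(w)\cap P\neq\emptyset$, iff $\xi(w)\in C$. Hence $\xi^{-1}(C)=L_{\exists x.\Phi}$, and $\xi$ recognises $L_{\exists x.\Phi}$ as required.

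\smallskip

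\noindent\textbf{Main obstacle.} The routine verifications (the empty-word case, the commutativity of the square, the choice of $C$) are straightforward once $\xi$ is in place. The only non-trivial step is the combinatorial identity for $(uv)^{(i)}$ that gives rise to the bilateral semidirect product structure: this is precisely what forces the multiplication in $\Ds M$ to take the form $(S,m)*(T,n)=(S\cdot n\cup m\cdot T,\ m\cdot n)$, and hence what motivates Definition~\ref{def:unary-schutz-monoid}.
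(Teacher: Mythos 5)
Your proposal is correct and follows essentially the same route as the paper's proof: the paper also defines $\xi$ as the pairing of $\xi_1$ with $\tau\circ\gamma_0$, verifies multiplicativity via the same decomposition of $(vw)^{(i)}$ into $v^{(i)}w^0$ or $v^0 w^{(i-|v|)}$, and recognises $L_{\exists x.\Phi}$ through the set $\Dv V\times M$ with $V=\tau^{-1}$ of the accepting set. No substantive differences.
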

\begin{proof}[Proof idea]
   The map $\xi$ is obtained by pairing
  $\xi_1\colon\Alp^*\to\Pfin(M)$ of~\eqref{eq:rel-PM} and $\tau\circ
  \gamma_0\colon\Alp^*\to M$. Explicitly,
\[
w\mapsto (\{\tau(w^{(i)})\mid 0\le i< |w|\},\tau(w^0)).
\]
%
One may show that the map $\xi$ is a monoid
morphism with respect to the concatenation on $\Alp^*$ and the
multiplication $*$ on the semidirect product $\Pfin(M)*M$. 
%
Now let $V$ be a subset of $M$ such that $L_\Phi=\tau^{-1}(V)$, and
consider the set $\Dv V\subseteq\Pfin(M)$ defined as
$\{S\in\Pfin(M)\mid S\cap V\neq\emptyset\}$. Then $\xi^{-1}(\Dv V\times
M)$ is precisely $L_{\exists x.\Phi}$.
\end{proof} 

\begin{remark} 
In~\cite{Straubing1981} Straubing generalised the Sch\"{u}tzenberger product for any finite
number of monoids. Using his construction, the unary
Sch\"{u}tzenberger product of $M$ is simply $M$, and hence is
different from $\Ds M$ introduced above. 

For the connection between closure under concatenation product and first-order quantification in the regular setting, see \cite{MP1971}.
\end{remark}
\begin{remark} 
For lack of space, we have chosen to just `pull Definition~\ref{def:unary-schutz-monoid}
(and consequently also the upcoming Definition~\ref{def:unary-schutz-spaces}) out of a hat'. 
However, by a careful analysis of how quotients in $\P(\Alp^*)$ of languages $L_\exists$ 
are calculated, relative to corresponding calculations in $\P((\Alp\times 2)^*)$, one may simply 
derive by duality that the operation given here is the right one.
\end{remark}
\subsection{The Sch\"{u}tzenberger product for one space $\Ds X$}
\label{ss:unary-schutz}

In this section we assume that the language $L_\Phi\subseteq(\Atw)^*$ is
recognised by a morphism of Boolean spaces with internal monoids
$ \tau\colon (\beta(\Atw)^*,(\Atw)^*)\to(X,M)$.
Notice that in this case we have a pair of continuous maps
\begin{equation}\label{eq:topo-span}\begin{tikzcd}[row sep=0.5em,column sep=3em]
 & \beta(\Alp^*\otimes\nbb) \arrow{dl}[swap]{\beta\pi} \arrow{dr}{\beta\gamma_1}& & \\
 \beta(\Alp^*) & & \beta(\Atw)^* \arrow{dr}{\tau} & \\
 & & & X
\end{tikzcd}\end{equation}
which, as before, yields a relation $\beta(\Alp^*)\nrightarrow X$. We
would like to describe this relation as a continuous map on
$\beta(\Alp^*)$. To this end, we need an analogue for spaces of the 
finite power set construction. This is provided by the \emph{Vietoris
space construction} (see Section \ref{ss:Vietoris} in the appendix for further details).

\begin{definition}
  Let $X$ be a Boolean space. The \emph{Vietoris space} $\V(X)$ is the Boolean space
  with underlying set $\{K\subseteq X\mid K\ \text{is closed in}\ X\}$, and topology
  generated by the subbasis consisting of the sets, for $V$ clopen in $X$, of the
  form 
  \vspace{-.2cm}
  \begin{align*}
  \boxa V:=\{K\in\V(X)\mid K\subseteq V\} \quad \text{and} \quad \Dv V:=\{K\in\V(X)\mid K\cap V\neq\emptyset\}.
  \end{align*}
  
  %
\end{definition}
\vspace{-.2cm}
Just as in the monoid case, diagram~\eqref{eq:topo-span} yields a map
\vspace{-.1cm}
  \begin{align}\label{eq:map-xi-1-for-spaces}
  \xi_1\colon\beta(\Alp^*)\to \V(X)
  \end{align}
  \vspace{-.5cm}
   
 \noindent defined as the composition $\tau\circ\beta\gamma_1\circ (\beta\pi)^{-1}$, or equivalently 
 as the unique continuous extension of the map $\xi_1\colon\Alp^*\to\Pfin(M)$ defined in~\eqref{eq:rel-PM}.
 %
%
\begin{definition}
\label{def:unary-schutz-spaces}
  We define the \emph{unary Sch\"{u}tzenberger product} of a Boolean
  space with an internal monoid $(X,M)$ as the pair $(\Ds X,\Ds M)$,
  where $\Ds X$ is the space $\V(X)\times X$ equipped with the product
  topology and $\Ds M$ is as in
  Definition~\ref{def:unary-schutz-monoid}.
\end{definition}

\begin{lemma}\label{l:unary-schutz-product-is-a-bin}
  The unary Sch\"{u}tzenberger product $(\Ds X,\Ds M)$ of $(X,M)$ is a Boolean space with an
  internal monoid.
\end{lemma}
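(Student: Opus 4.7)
I would verify, in order, the three clauses of Definition~\ref{d:spaces-with-internal-monoids}.

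First, $\Ds X$ is a Boolean space. The Vietoris space of a Boolean space is itself Boolean (recalled in the appendix), and finite products of Boolean spaces are Boolean under the product topology, so $\Ds X = \V(X)\times X$ is Boolean.

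Second, I would show that $\Ds M = \Pfin(M)\times M$ sits as a dense subspace of $\Ds X$. As density in a product factor-wise, and $M$ is dense in $X$ by hypothesis, it suffices to check that $\Pfin(M)$ is dense in $\V(X)$. I would take an arbitrary nonempty basic open in $\V(X)$, which (after consolidating $\boxa$'s by intersection) has the form $\boxa V_0\cap\bigcap_{i=1}^n \Dv V_i$ for clopens $V_0,\dots,V_n\subseteq X$, pick any witness $K$, choose a point $x_i\in K\cap V_i$ for each $i\ge 1$, and then use density of $M$ in the nonempty clopen $V_0\cap V_i$ to produce some $m_i\in M\cap V_0\cap V_i$. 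The finite set $\{m_1,\dots,m_n\}$ is closed in the Hausdorff space $X$, belongs to $\Pfin(M)$, and lies in the given basic open.

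Third, I need to equip $\Ds X$ with a biaction of $\Ds M$ with continuous components, extending the $\Ds M$-biaction on itself. Guided by the multiplication $(S,m)*(T,n)=(Sn\cup mT,\,mn)$ in $\Ds M$, I propose
\begin{align*}
\lambda_{(T,n)}(K,x) &:= \bigl(\{\lambda_t(x)\mid t\in T\}\,\cup\,\lambda_n[K],\ \lambda_n(x)\bigr),\\
\rho_{(T,n)}(K,x) &:= \bigl(\rho_n[K]\,\cup\,\{\rho_t(x)\mid t\in T\},\ \rho_n(x)\bigr).
\end{align*}
Substituting $(K,x)=(S,m)\in\Ds M$ and using that $\lambda_t(m)=tm$, $\lambda_n[S]=nS$, etc., recovers exactly the left/right multiplications of $(T,n)$ in $\Ds M$, so the biaction extends the one on $\Ds M$. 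Continuity of each component reduces to three standard properties of $\V$ (see the appendix): the functoriality of $\V$, which makes $K\mapsto\lambda_n[K]$ continuous as $\V(\lambda_n)$; the continuity of the singleton map $X\to\V(X)$, $x\mapsto\{x\}$; and the continuity of finite union $\V(X)\times\V(X)\to\V(X)$. The first coordinate of $\lambda_{(T,n)}$ is then the union of $\V(\lambda_n)$ applied to $K$ and a finite union of singletons $\{\lambda_t(x)\}$ (one for each $t\in T$), and the second coordinate is $\lambda_n$; analogously for $\rho_{(T,n)}$. Compatibility $\lambda_{(T,n)}\circ\rho_{(S,m)}=\rho_{(S,m)}\circ\lambda_{(T,n)}$ is checked by expanding both sides and reducing, coordinate by coordinate and element by element of the finite sets $S$, $T$, to the compatibility $\lambda_n\circ\rho_m=\rho_m\circ\lambda_n$ of the $M$-biaction on $X$.

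The main obstacle is identifying the correct formulas for the extended biaction; once these are written down, continuity and extension of the monoid operation follow from the Vietoris toolbox, and the biaction compatibility simply lifts the corresponding property of $M$ on $X$.
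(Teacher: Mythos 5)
Your proposal is correct and follows essentially the same route as the paper: the same reduction of density to the fact that $\Pfin(M)$ is dense in $\V(X)$ (which you prove directly rather than citing the appendix lemma taken from Kuratowski), and the identical formulas for the extended biaction. The only cosmetic difference is in the continuity check --- the paper computes preimages of the subbasic clopens $\Dv V$ directly, whereas you assemble each component from $\V(\lambda_n)$, the singleton embedding $\eta$, and the union map $\V(X)\times\V(X)\to\V(X)$; note that the continuity of this last map is not actually stated in the appendix, though it is an immediate subbasis computation.
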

\begin{proof}[Proof Idea]
Recall that $M$ is a dense subspace of $X$. It follows by
  Lemma~\ref{l:finite-powerset-dense-in-vietoris} in Appendix~\ref{a:appendix-unary-schutz-product}
  that $\Pfin(M)$ is a dense subspace of $\V(X)$. Thus the monoid $\Ds M$ is a dense subspace of
  $\Ds X$. Next we define the actions of $\Ds M$ on $\Ds X$ as
  follows:
  \vspace{-.2cm}
  \begin{align*}
  \begin{split}
    l_{(S,m)}(T,x):=(\{\lambda_s(x)\mid s\in S\}\cup\lambda_m[T],\lambda_m(x)),\\
    r_{(S,m)}(T,x):=(\{\rho_s(x)\mid s\in S\}\cup\rho_m[T],\rho_m(x)).
  \end{split}
  \end{align*}
  
  \vspace{-.2cm}
  
 \noindent It is not difficult to see that the above maps are the
  unique continuous extensions to $\Ds X$ of the multiplication by $(S,m)$, to the
  left and to the right, on $\Ds M$.
\end{proof}  


The projection  $\pi_2\colon\Ds X\to X$
is a morphism of Boolean spaces with internal monoids.

\begin{proposition}
\label{prop:recog-ex-unary-schutz}
  If $\tau\colon (\beta(\Atw)^*,(\Atw)^*)\to(X,M)$ is a morphism of
  Boolean spaces with internal monoids recognising $L_\Phi$, then there 
  is a morphism 
   $ \xi\colon(\beta(\Alp^*),\Alp^*)\to(\Ds X,\Ds M)$
recognising $L_{\exists x.\Phi}$ and such that the following diagram commutes.
\vspace*{-.2cm}
\begin{equation*} 
  \label{eq:rec-un-schutz=sp} 
  \begin{tikzcd}
    \beta(\Alp^*) \arrow{r}{\xi} \arrow{d}[swap]{\beta\gamma_0} & \Ds X \arrow{d}{\pi_2} \\
    \beta(\Atw)^* \arrow{r}{\tau} & X
  \end{tikzcd} 
\end{equation*}
\vspace{-.2cm}
\end{proposition}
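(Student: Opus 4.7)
The plan is to lift the construction from Proposition~\ref{prop:monoid-blprod} to the topological setting by defining $\xi$ as a pairing of two already-available continuous maps. Concretely, I would set
\[
  \xi := \langle \xi_1,\, \tau\circ\beta\gamma_0\rangle\colon \beta(\Alp^*)\longrightarrow \V(X)\times X = \Ds X,
\]
where $\xi_1$ is the continuous map from \eqref{eq:map-xi-1-for-spaces}. Continuity of $\xi$ is immediate from the universal property of the product topology, and commutativity of the displayed square is true by definition, since $\pi_2\circ\xi$ is simply the second component $\tau\circ\beta\gamma_0$.

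Next I would verify that $\xi$ is a morphism of Boolean spaces with internal monoids, i.e.\ that it restricts to a monoid morphism $\Alp^*\to\Ds M$. For $w\in\Alp^*$, the image $\xi_1(w) = \{\tau(w^{(i)})\mid 0\le i<|w|\}$ lies in $\Pfin(M)$ because $\tau$ restricts to a monoid morphism $(\Atw)^*\to M$; similarly $\tau(\gamma_0(w))=\tau(w^0)\in M$. Hence $\xi[\Alp^*]\subseteq \Pfin(M)\times M = \Ds M$. That this restriction is actually a monoid morphism is precisely the content of Proposition~\ref{prop:monoid-blprod} applied to the monoid morphism $\tau|_{(\Atw)^*}\colon(\Atw)^*\to M$ (which recognises $L_\Phi$ by hypothesis). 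Thus no fresh calculation is needed here.

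For the recognition claim, fix a clopen $V\subseteq X$ with $\tau^{-1}(V)=\wL_\Phi$, and take the clopen $C:=\Dv V\times X$ of $\Ds X$. I claim $\xi^{-1}(C)=\widehat{L_{\exists x.\Phi}}$. Since both are clopen subsets of $\beta(\Alp^*)$ and $\Alp^*$ is dense in $\beta(\Alp^*)$, it suffices to show that they have the same intersection with $\Alp^*$. For $w\in\Alp^*$ one has $w\in\xi^{-1}(C)$ iff $\xi_1(w)\cap V\neq\emptyset$ iff there exists $i<|w|$ with $\tau(w^{(i)})\in V$ iff there exists $i<|w|$ with $\gamma_1(w,i)\in L_\Phi$, which by the Remark on $L_\exists$ is exactly $w\in L_{\exists x.\Phi} = \widehat{L_{\exists x.\Phi}}\cap\Alp^*$.

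The main obstacle, and the only genuinely topological point, is the final density argument: two clopens of $\beta(\Alp^*)$ that agree on $\Alp^*$ must coincide, because their symmetric difference is a clopen (hence, if nonempty, a nonempty open set) disjoint from the dense subspace $\Alp^*$. Everything else reduces either to the discrete case in Proposition~\ref{prop:monoid-blprod} or to formal properties of $\xi_1$, the Vietoris subbasis element $\Dv V$, and the product topology on $\Ds X$.
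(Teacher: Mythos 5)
Your proposal is correct and follows essentially the same route as the paper's proof: both define $\xi$ as the pairing of the continuous map $\xi_1$ from \eqref{eq:map-xi-1-for-spaces} with $\tau\circ\beta\gamma_0$, reduce the monoid-morphism property to (the proof of) Proposition~\ref{prop:monoid-blprod}, and recognise $L_{\exists x.\Phi}$ via the clopen $\Dv V\times X$. Your explicit density argument for identifying the two clopens is just a spelled-out version of the step the paper labels ``it is easy to see''.
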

\vspace{-.2cm}
All the constructions introduced so far can be carried out for semigroups. In particular, we can consider
Boolean spaces with internal semigroups as recognisers of languages in $\P(\Alp^+)$.  Along the lines of
Definition~\ref{def:unary-schutz-monoid}, we introduce the unary
Sch\"{u}tzenberger product $\Ds S$ of a semigroup $S$ as the bilateral
semidirect product of the semigroups $(\Pfin^+(S),\cup)$ and
$(S,\cdot)$, where $\Pfin^{+}(S)$ denotes the family of finite non-empty
subsets of $S$. Similarly, at the level of spaces, in the Vietoris construction we
will consider only non-empty closed subsets.

Now, write $\B(X,\Alp)$ for the Boolean algebra of languages in
$\P(\Alp^{+})$ recognised by the Boolean space with an internal semigroup
$(X,S)$, and note that the latter Boolean algebra is always closed under
quotients. Moreover, given a language
  $L\subseteq(\Atw)^+$, recall that $L_\exists$ denotes the language $\pi[\gamma_1^{-1}(L)]$.
\begin{theorem}
  \label{th:recognised-by-diamond-X}
  Let $(X,S)$ be a Boolean space with an internal semigroup, and let 
  $\B(X,\Atw)_\exists$ denote the Boolean subalgebra closed under
  quotients of $\P(\Alp^+)$ generated by the family
  $\{L_\exists\mid L\in\B(X,\Atw)\}$. Then $\B(\Ds X,\Alp)$ coincides with the Boolean algebra
  generated by the union of $\B(X,\Alp)$ and $\B(X,\Atw)_\exists$.  
\end{theorem}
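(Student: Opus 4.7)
The plan is to prove both inclusions, writing $\mathcal{C}$ for the Boolean algebra generated by $\B(X,\Alp)\cup\B(X,\Atw)_\exists$.

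For the inclusion $\mathcal{C}\subseteq\B(\Ds X,\Alp)$, I would first check that the map $i\colon(X,S)\to(\Ds X,\Ds S)$ defined by $x\mapsto(\{x\},x)$ is a morphism of Boolean spaces with internal semigroups: continuity reduces to the identity $i^{-1}(\Dv V\times X)=V=i^{-1}(\boxa V\times X)$ for every clopen $V\subseteq X$, and the computation $(\{s\},s)*(\{t\},t)=(\{st\},st)$ shows that its restriction to $S$ is a semigroup morphism into $\Ds S$. Post-composing any recognizer with $i$ shows $\B(X,\Alp)\subseteq\B(\Ds X,\Alp)$; the semigroup analogue of Proposition~\ref{prop:recog-ex-unary-schutz} gives $\{L_\exists\mid L\in\B(X,\Atw)\}\subseteq\B(\Ds X,\Alp)$; and since $\B(\Ds X,\Alp)$ is itself a Boolean algebra closed under quotients, the inclusion follows.

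For the converse, let $L=\xi^{-1}(C)\in\B(\Ds X,\Alp)$ and write $\xi=(\xi_1,\xi_2)$ with components into $\V^+(X)$ and $X$. Since $\boxa V=(\Dv(X\setminus V))^c$, every clopen of $\Ds X=\V^+(X)\times X$ is a Boolean combination of subbasic clopens of the form $\Dv V\times X$ and $\V^+(X)\times U$, so $L$ is a Boolean combination of languages $\xi_2^{-1}(U)$ --- which lie in $\B(X,\Alp)$ since $\xi_2$ is a morphism $(\beta(\Alp^+),\Alp^+)\to(X,S)$ --- and languages $\xi_1^{-1}(\Dv V)$. The whole argument therefore reduces to showing $\xi_1^{-1}(\Dv V)\in\B(X,\Atw)_\exists$ for every clopen $V\subseteq X$.

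To this end, set $s_a:=\xi_2(a)$ and $T_a:=\xi_1(a)\in\Pfinp(S)$ for $a\in\Alp$. Because $\xi$ restricts to a semigroup morphism $\Alp^+\to\Ds S$, a direct induction on $|w|$ using the product $*$ on $\Ds S$ yields, for $w=a_1\cdots a_n$,
\[
\xi_1(w)\;=\;\bigcup_{i=1}^{n} s_{a_1\cdots a_{i-1}}\cdot T_{a_i}\cdot s_{a_{i+1}\cdots a_n},
\]
so that $\xi_1^{-1}(\Dv V)\cap\Alp^+=\{w\mid\exists i,\,\exists t\in T_{a_i},\ s_{a_1\cdots a_{i-1}}\,t\,s_{a_{i+1}\cdots a_n}\in V\}$. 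The core idea is to capture the two nested existentials $\exists i$ and $\exists t\in T_{a_i}$ with the single $\exists i$ supplied by $(\cdot)_\exists$, via a parametrisation by choice functions. Let $\mathcal{T}:=\prod_{a\in\Alp}T_a$, which is finite because $\Alp$ and each $T_a$ are finite, and for every $c\in\mathcal{T}$ define the semigroup morphism $\tau_c\colon(\Atw)^+\to S$ by $\tau_c(b,0):=s_b$ and $\tau_c(b,1):=c(b)$. A direct calculation gives $\tau_c(w^{(i)})=s_{a_1\cdots a_{i-1}}\,c(a_i)\,s_{a_{i+1}\cdots a_n}$, and using non-emptiness of each $T_a$ to extend any witnessing $t\in T_{a_i}$ to a full choice function, one obtains the finite-union identity
\[
\xi_1^{-1}(\Dv V)\cap\Alp^+\;=\;\bigcup_{c\in\mathcal{T}}(\tau_c^{-1}(V))_\exists.
\]
Each $\tau_c^{-1}(V)$ lies in $\B(X,\Atw)$, so the right-hand side is a finite union of generators of $\B(X,\Atw)_\exists$ and hence lies in $\B(X,\Atw)_\exists\subseteq\mathcal{C}$, which finishes the argument. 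The main obstacle I anticipate is pinning down this parametrisation: a single $\tau$ sending $(b,1)$ to a fixed element of $S$ produces an unwanted $\xi_2^{-1}(V)$-contamination at positions where the marker carries a ``wrong'' letter, and it is precisely the replacement of this contamination by a quantification over $\mathcal{T}$ that lets the outer $\exists$ in $(\cdot)_\exists$ absorb both $\exists i$ and $\exists t\in T_{a_i}$ simultaneously.
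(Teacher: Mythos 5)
Your proposal is correct and follows essentially the same route as the paper's own proof: the singleton embedding $x\mapsto(\{x\},x)$ and the semigroup version of Proposition~\ref{prop:recog-ex-unary-schutz} for one inclusion, and for the converse the reduction to $\xi_1^{-1}(\Dv V)$ via Reutenauer's decomposition of $\xi_1(w)$ followed by the parametrisation over the finite product of choice functions $\prod_{a\in\Alp}\xi_1(a)$ and the morphisms $\tau_c$ on $(\Atw)^+$ --- including the observation that non-emptiness of the sets $\xi_1(a)$ is what makes the union identity work. The only cosmetic difference is that the paper splits the decomposition of $\xi_1(w)$ into four cases ($w=uav$, $w=ua$, $w=av$, $w=a$) to avoid writing $\xi_2$ of the empty word in the semigroup setting, where your single formula implicitly treats the empty prefix/suffix factors as absent.
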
  

The proof of this theorem 
hinges on
the fact that the first components of the recognising morphisms
evaluate to non-empty subsets. An analogous statement can be
formulated for monoids, but we would have to restrict the recognising
morphisms when defining $\B(\Ds X,\Alp)$.
\section{A variant of the Sch\"{u}tzenberger product for two spaces}\label{s:binary-schutz-product}
Given two monoids $(M,\cdot),(N,\cdot)$, the Sch\"{u}tzenberger product $\Ds(M,N)$ can be defined as the monoid $\Pfin(M\times N)\times M\times N$ whose operation
is given by
\begin{align*}
(S,m_1,n_1)\cdot (T,m_2,n_2):=(m_1\cdot T\cup S\cdot n_2, m_1\cdot m_2, n_1\cdot n_2).
\end{align*}
Now, consider two Boolean spaces with internal monoids $(X,M)$ and
$(Y,N)$. We define the space $\Ds(X,Y)$ as the product $\V(X\times
Y)\times X\times Y$. It is clear that the monoid $\Ds(M,N)$ is dense in $\Ds(X,Y)$.
Moreover, the left action of $\Ds(M,N)$ on itself can be extended to $\Ds(X,Y)$ by setting, for any $(S,m_1,n_1)\in \Ds(M,N)$,
\begin{align}\label{eq:left-action-binary-schutz}
\lambda_{(S,m_1,n_1)}\colon \Ds(X,Y)\to \Ds(X,Y), \ (Z,x,y)\mapsto (m_1Z\cup Sy, \lambda_{m_1}(x), \lambda_{n_1}(y)),
\end{align}
where
\begin{align*}
\hskip-5pt m_1 Z:=\{(\lambda_{m_1}(x),y)\in X\times Y\mid (x,y)\in Z\} \ \text{ and }\ S y:=\{(m,\lambda_{n}(y))\in X\times Y\mid (m,n)\in S\}.
\end{align*}
Similarly, the right action can be defined by
\begin{align}\label{eq:right-action-binary-schutz}
\rho_{(S,m_1,n_1)}\colon \Ds(X,Y)\to \Ds(X,Y), \ (Z,x,y)\mapsto (Zn_1\cup xS, \rho_{m_1}(x), \rho_{n_1}(y)),
\end{align}
where
\begin{align*}
\hskip-5pt Zn_1:=\{(x,\rho_{n_1}(y))\in X\times Y\mid (x,y)\in Z\} \ \text{ and }\ xS:=\{(\rho_{m}(x),n)\in X\times Y\mid (m,n)\in S\}.
\end{align*}
It is easy to see that we obtain a biaction of $\Ds(M,N)$ on $\Ds(X,Y)$. Furthermore,
\begin{lemma}\label{l:action-on-binary-schutz-continuous}
The biaction of $\Ds(M,N)$ on $\Ds(X,Y)$ defined in~\eqref{eq:left-action-binary-schutz} and~\eqref{eq:right-action-binary-schutz} has continuous components. 
Thus $(\Ds(X,Y),\Ds(M,N))$ is a Boolean space with an internal monoid.
\end{lemma}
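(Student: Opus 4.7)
The plan is to verify continuity coordinate-by-coordinate, exploiting the product topology on $\Ds(X,Y)=\V(X\times Y)\times X\times Y$. The second and third coordinates of $\lambda_{(S,m_1,n_1)}$, namely $(Z,x,y)\mapsto\lambda_{m_1}(x)$ and $(Z,x,y)\mapsto\lambda_{n_1}(y)$, are immediately continuous, being projections composed with continuous components of the internal biactions on $(X,M)$ and $(Y,N)$. The substance lies in the Vietoris coordinate $(Z,x,y)\mapsto m_1 Z\cup Sy$, which I would decompose as a pair $\langle\alpha,\beta\rangle\colon\Ds(X,Y)\to\V(X\times Y)\times\V(X\times Y)$ followed by the binary union map $\cup\colon\V(X\times Y)\times\V(X\times Y)\to\V(X\times Y)$, a standard continuous operation on Vietoris spaces.

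For $\alpha(Z,x,y):=m_1 Z$, I would observe that this equals the direct image of $Z$ under the continuous endomap $\lambda_{m_1}\times\mathrm{id}_Y$ of $X\times Y$, so $\alpha$ factors through the projection to $Z$ followed by $\V(\lambda_{m_1}\times\mathrm{id}_Y)$, which is continuous by functoriality of $\V$. For $\beta(Z,x,y):=Sy$, note that this depends only on $y$ and, since $S$ is finite, equals $\bigcup_{(m,n)\in S}\{(m,\lambda_n(y))\}$; each term $y\mapsto\{(m,\lambda_n(y))\}$ is the composition of a continuous map $Y\to X\times Y$ with the continuous singleton embedding $X\times Y\hookrightarrow\V(X\times Y)$, and a finite iteration of the continuous union yields continuity of $\beta$. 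The right action is handled symmetrically, using $\rho_{m_1}\times\mathrm{id}_Y$ and the dual formulas for $Zn_1$ and $xS$.

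To conclude that $(\Ds(X,Y),\Ds(M,N))$ satisfies Definition~\ref{d:spaces-with-internal-monoids}, I would note that $\Ds(X,Y)$ is Boolean as a finite product of Boolean spaces; that $\Ds(M,N)$ is dense in $\Ds(X,Y)$ thanks to the density of $M\subseteq X$, of $N\subseteq Y$, and of $\Pfin(M\times N)\subseteq\V(X\times Y)$ (Lemma~\ref{l:finite-powerset-dense-in-vietoris} applied to $X\times Y$); and that the extended biaction restricts to the biaction of $\Ds(M,N)$ on itself by direct inspection of the defining formulas on elements of $\Pfin(M\times N)\times M\times N$. The action axioms and compatibility of the two actions then lift from the dense subset $\Ds(M,N)$ to $\Ds(X,Y)$ by continuity. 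The main obstacle is little more than careful bookkeeping in the Vietoris coordinate; once the standard facts — functoriality of $\V$, continuity of the singleton embedding, continuity of binary union — are in hand, the argument is mechanical.
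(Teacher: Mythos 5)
Your argument is correct, but it reaches the conclusion by a genuinely different route from the paper's. The paper reduces everything to the single map $g\colon\V(X\times Y)\times Y\to\V(X\times Y)$, $(Z,y)\mapsto m_1Z\cup Sy$, and verifies its continuity by direct computation: it works out the preimages $g^{-1}(\boxa(L_1\times L_2))$ and $g^{-1}(\Dv(L_1\times L_2))$ of the subbasic clopens explicitly, exhibiting each as a finite Boolean combination of clopens built from $\lambda_{m_1}^{-1}(L_1)$ and the sets $\lambda_n^{-1}(L_2)$ for $(m,n)\in S$. You instead assemble the Vietoris coordinate from three structural facts: functoriality of $\V$ on continuous maps between Boolean spaces (recorded in the paper as~\eqref{eq:vietoris-functor-on-maps}), continuity of the singleton embedding $\eta$ (recorded as~\eqref{eq:embedding-of-X-into-vietoris}), and continuity of binary union on $\V(X\times Y)$. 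Only the last ingredient is not already stated in the paper, and it is a one-line check on subbasic opens, namely $\cup^{-1}(\boxa V)=\boxa V\times\boxa V$ and $\cup^{-1}(\Dv V)=(\Dv V\times \V(X\times Y))\cup(\V(X\times Y)\times\Dv V)$, so there is no gap. Your decomposition is more modular and would generalise with no extra work (e.g.\ to $n$-ary Sch\"utzenberger products), at the cost of not producing the explicit clopen preimages that the paper's computation yields and which are of the same flavour as those used in the proof of Lemma~\ref{l:unary-schutz-product-is-a-bin}. You are also somewhat more explicit than the paper about the second assertion of the lemma --- density of $\Ds(M,N)$ via Lemma~\ref{l:finite-powerset-dense-in-vietoris} applied to $M\times N\subseteq X\times Y$, and the lifting of the monoid-action identities and of compatibility from the dense submonoid by continuity and the Hausdorff property --- points the main text dismisses as clear.
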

%

%

The next three results establish the
connection between concatenation of possibly non-regular languages and
the Sch\"{u}tzenberger product of Boolean spaces with internal
monoids. We thus extend the theorems of Sch\"{u}tzenberger~\cite{Schutzenberger65} and
Reutenauer~\cite{Reutenauer1979}.

\begin{theorem}[Reutenauer's theorem, global version]
 \label{th:reutenauer-global}
  Consider Boolean spaces with dense monoids $(X,M)$ and
  $(Y,N)$. Let $\mathcal{L}$ be the Boolean algebra generated by
  all the $\Alp^*$-languages of the form $L_1,L_2$ and $L_1aL_2$,
  where $L_1$ \textup{(}respectively $L_2$\textup{)} is recognised by
  $X$ \textup{(}respectively $Y$\textup{)} and $a\in \Alp$.  Then
  a $\Alp^*$-language is recognised by $X \Ds Y$ if, and only if,
  it belongs to $\mathcal{L}$.
\end{theorem}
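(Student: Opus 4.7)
The plan is to establish the two inclusions separately, exploiting the fact that the languages recognised by any Boolean space with an internal monoid form a Boolean algebra. For $\mathcal{L} \subseteq \mathrm{Rec}(X\Ds Y)$ it then suffices to show that each of the three types of generators is recognised by $\Ds(X,Y)$, while for the converse it suffices to identify a small generating family of clopens of $\Ds(X,Y) = \V(X\times Y)\times X\times Y$ and pull each back along an arbitrary recognising morphism.

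For the sufficiency direction, let $f_1\colon(\beta(\Alp^*),\Alp^*)\to(X,M)$ and $f_2\colon(\beta(\Alp^*),\Alp^*)\to(Y,N)$ recognise $L_1$ and $L_2$ via clopens $C_1,C_2$. For $L_1$ I would use the assignment $\xi(w):=(\emptyset,f_1(w),1_N)$, which is immediately seen to be a monoid morphism $\Alp^*\to \Ds(M,N)$ and extends uniquely to a morphism of Boolean spaces with internal monoids recognising $L_1$ via $\V(X\times Y)\times C_1\times Y$; the $L_2$ case is symmetric. For the concatenation $L_1aL_2$ I would take the canonical Schützenberger assignment
\[
  \xi(w):=(S_w,f_1(w),f_2(w)), \qquad S_w:=\{(f_1(u),f_2(v))\mid w=uav\}.
\]
Splitting every factorisation $uv=u'av'$ according to whether the distinguished $a$ lies in $u$ or $v$ yields $S_{uv}=f_1(u)\cdot S_v\cup S_u\cdot f_2(v)$, so $\xi$ is a monoid morphism into $\Ds(M,N)$, and by the universal property of $\beta$ it extends uniquely to a morphism into $(\Ds(X,Y),\Ds(M,N))$. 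Then $\xi^{-1}(\Dv(C_1\times C_2)\times X\times Y)=L_1aL_2$.

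For the necessity direction, let $\xi\colon(\beta(\Alp^*),\Alp^*)\to(\Ds(X,Y),\Ds(M,N))$ be arbitrary. Clopens of $X\times Y$ are finite disjoint unions of rectangles $D_i\times E_i$; from $\Dv(V_1\cup V_2)=\Dv V_1\cup\Dv V_2$ and $\boxa V=\V(X\times Y)\setminus\Dv((X\times Y)\setminus V)$ it follows that the clopen algebra of $\V(X\times Y)$ is Boolean-generated by $\{\Dv(D\times E)\}$. Hence the clopens of $\Ds(X,Y)$ are generated by $\Dv(D\times E)\times X\times Y$, $\V(X\times Y)\times B\times Y$ and $\V(X\times Y)\times X\times C$. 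The latter two pull back, via the $X$- and $Y$-projections (which are morphisms of Boolean spaces with internal monoids), to languages recognised by $X$ and $Y$ respectively. For the first, decompose the restriction $\xi|_{\Alp^*}$ into components $\sigma\colon\Alp^*\to\Pfin(M\times N)$, $h_1\colon\Alp^*\to M$, $h_2\colon\Alp^*\to N$ satisfying $\sigma(uv)=h_1(u)\cdot\sigma(v)\cup\sigma(u)\cdot h_2(v)$. Iterating letter by letter,
\[
  \sigma(a_1\cdots a_n)=\bigcup_{i=1}^n h_1(a_1\cdots a_{i-1})\cdot\sigma(a_i)\cdot h_2(a_{i+1}\cdots a_n),
\]
whence, as subsets of $\Alp^*$,
\[
  \xi^{-1}(\Dv(D\times E))=\bigcup_{a\in\Alp}\bigcup_{(m,n)\in\sigma(a)} h_1^{-1}(\rho_m^{-1}(D))\cdot a\cdot h_2^{-1}(\lambda_n^{-1}(E)).
\]
Each inner union is finite because $\sigma(a)\in\Pfin(M\times N)$ and $\Alp$ is finite; the factors $h_1^{-1}(\rho_m^{-1}(D))$ and $h_2^{-1}(\lambda_n^{-1}(E))$ are recognised by $X$ and $Y$ because $\rho_m,\lambda_n$ are continuous and $D,E$ are clopen. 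Thus $\xi^{-1}(\Dv(D\times E))\in\mathcal{L}$.

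The main obstacle is to pass cleanly from this $\Alp^*$-level identity to the required clopen equality in $\beta(\Alp^*)$: on a non-principal ultrafilter the first component of $\xi$ is a closed subset of $X\times Y$ with no finite description, so one cannot directly write the Schützenberger formula for $\sigma$ there. The resolution is that a clopen of $\beta(\Alp^*)$ is determined by its intersection with the dense subset $\Alp^*$, so the finite-union expression, once established on $\Alp^*$, transfers to the recognised language as required. A secondary technical point is to verify that $\Dv(D\times E)$ is indeed clopen in $\V(X\times Y)$, which follows since its complement is $\boxa((X\times Y)\setminus(D\times E))$ and $(X\times Y)\setminus(D\times E)$ is clopen in the Boolean space $X\times Y$.
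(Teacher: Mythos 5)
Your proposal is correct and follows essentially the same route as the paper's proof: the same Sch\"utzenberger assignment $w\mapsto(S_w,\phi_1(w),\phi_2(w))$ with the identity $S_{uv}=\phi_1(u)\cdot S_v\cup S_u\cdot\phi_2(v)$ for the forward direction, and the same Reutenauer decomposition $\sigma(w)=\bigcup_{w=uav}h_1(u)\sigma(a)h_2(v)$ yielding the finite union $\bigcup h_1^{-1}(\rho_m^{-1}(D))\,a\,h_2^{-1}(\lambda_n^{-1}(E))$ for the converse. Your only departures are cosmetic: you recognise $L_1,L_2$ via the auxiliary morphism $w\mapsto(\emptyset,f_1(w),1_N)$ rather than reusing $\langle\zeta_a,\phi_1,\phi_2\rangle$, and you make explicit the reduction to subbasic clopens $\Dv(D\times E)$ and the density argument that the paper leaves implicit.
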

\begin{proof}[Proof Idea]
Suppose the languages $L_1, L_2$ are recognised by morphisms $\phi_1\colon (\beta(\Alp^*),\Alp^*)\to (X,M)$ and 
$\phi_2\colon (\beta(\Alp^*),\Alp^*)\to (Y,N)$, respectively, and fix $a\in\Alp$.
 By abuse of notation, call $\phi_1\times \phi_2\colon \beta(\Alp^*\times \{a\}\times\Alp^*)\to X\times Y$ the unique continuous extension of the product map
$\Alp^*\times \{a\}\times\Alp^*\to X\times Y$ whose components are
$(w,a,w')\mapsto \phi_1(w)$ and $(w,a,w')\mapsto \phi_2(w')$.
Let $\zeta_a\colon \beta(\Alp^*)\to \V(X\times Y)$ be the continuous function induced by the diagram
\begin{equation}\label{eq:topo-span-zeta-a} \begin{tikzcd}[row sep=0.5em,column sep=3em]
 & \beta(\Alp^*\times \{a\}\times\Alp^*) \arrow{dl}[swap]{\beta c} \arrow{dr}{\phi_1\times \phi_2} & \\
 \beta(\Alp^*) & & X\times Y
\end{tikzcd} \end{equation}
just as for diagram~\eqref{eq:topo-span}, 
where $c\colon \Alp^*\times \{a\}\times\Alp^*\to \Alp^*$ is the concatenation map $(w,a,w')\mapsto waw'$.
One can prove that the map $\zeta_a$ is a morphism recognising $L_1,L_2$ and $L_1aL_2$.

Conversely, for any morphism $\langle\zeta,\phi_1,\phi_2\rangle\colon (\beta(\Alp^*),\Alp^*)\to(X \Ds Y,M \Ds N)$ and clopens $C_1\subseteq X$, $C_2\subseteq Y$, 
we must prove that $\zeta^{-1}(\Dv(C_1\times C_2))\cap\Alp^*\in\mathcal{L}$. One observes that each
\begin{align*}
L_{C_1\times C_2,a}:=\{w\in \Alp^*\mid \exists u,v\in \Alp^* \ \text{s.t.} \ w=uav \ \text{and} \ \phi_1(u)\zeta(a)\phi_2(v)\in \Dv (C_1\times C_2)\}
\end{align*}
is in the Boolean algebra $\mathcal{L}$. Then $\zeta^{-1}(\Dv(C_1\times C_2))\cap\Alp^*=\bigcup_{a\in\Alp} L_{C_1\times C_2,a}$.
\end{proof}
The next corollary follows at once by Theorem \ref{th:reutenauer-global}, by noting that $L_1L_2=\bigcup_{a\in\Alp}L_1a(a^{-1}L_2)$.
\begin{corollary}
\label{cor:schutz-for-spaces}
The Boolean space with an internal monoid $(\Ds(X,Y),\Ds(M,N))$ recognises the concatention $L_1L_2$ 
of languages $L_1$, $L_2$ recognised by $(X,M)$ and $(Y,N)$, respectively.
\end{corollary}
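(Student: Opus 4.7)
The strategy is to derive the corollary directly from Theorem~\ref{th:reutenauer-global} via the elementary language identity
\[
L_1 L_2 \;=\; \Big(\bigcup_{a\in\Alp} L_1 \cdot a\cdot (a^{-1}L_2)\Big) \,\cup\, S,
\]
where $S = L_1$ if the empty word belongs to $L_2$ and $S=\emptyset$ otherwise. Indeed, any $uv \in L_1 L_2$ either has $v$ empty (contributing the residual $S$) or starts with some letter $a\in\Alp$, in which case $v = a v'$ with $v' \in a^{-1} L_2$. It therefore suffices to show that each term on the right-hand side lies in the Boolean algebra $\mathcal{L}$ from Theorem~\ref{th:reutenauer-global}.

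The only non-routine check is that $a^{-1}L_2$ is still recognised by $(Y,N)$ whenever $L_2$ is. Given a recognising morphism $\phi_2 \colon (\beta(\Alp^*), \Alp^*) \to (Y, N)$ and a clopen $C \subseteq Y$ with $\phi_2^{-1}(C) = \widehat{L_2}$, Lemma~\ref{l:morphism-are-action-preserving} gives $\phi_2\circ\lambda_a = \lambda_{\phi_2(a)}\circ\phi_2$. Since the component $\lambda_{\phi_2(a)}\colon Y\to Y$ of the biaction is continuous, $C' := \lambda_{\phi_2(a)}^{-1}(C)$ is a clopen of $Y$, and a direct computation using~\eqref{eq:beta-on-maps} yields $\phi_2^{-1}(C') = \lambda_a^{-1}(\widehat{L_2}) = \widehat{a^{-1}L_2}$. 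Thus $a^{-1}L_2$ is recognised by $(Y,N)$; more generally, the Boolean algebra of languages recognised by a Boolean space with an internal monoid is closed under left quotients (and, symmetrically, under right quotients).

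Consequently, each $L_1 \cdot a \cdot (a^{-1}L_2)$ is of the shape $L_1' a L_2'$ appearing in the generators of $\mathcal{L}$, while the residual $S \in \{\emptyset, L_1\}$ is trivially in $\mathcal{L}$. Hence $L_1 L_2 \in \mathcal{L}$, and Theorem~\ref{th:reutenauer-global} then yields that $L_1 L_2$ is recognised by $(\Ds(X,Y), \Ds(M,N))$. I expect no serious obstacle here: the substantive content has been absorbed into Theorem~\ref{th:reutenauer-global}, and the remaining argument reduces to a standard language-theoretic decomposition together with the left-quotient closure property established above.
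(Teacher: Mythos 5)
Your proof is correct and follows essentially the same route as the paper, which derives the corollary from Theorem~\ref{th:reutenauer-global} via the identity $L_1L_2=\bigcup_{a\in\Alp}L_1a(a^{-1}L_2)$. Your version is in fact slightly more careful than the paper's: the extra term $S$ handling the case $\varepsilon\in L_2$ is genuinely needed (the paper's identity as stated misses words $uv$ with $v=\varepsilon$), and your explicit verification that the languages recognised by $(Y,N)$ are closed under quotients, via Lemma~\ref{l:morphism-are-action-preserving} and continuity of the action components, fills in a step the paper leaves implicit.
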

Finally, the following local statement is a direct consequence of the proof of Theorem \ref{th:reutenauer-global}.
\begin{theorem}[Reutenauer's theorem, local version]
\label{th:reutenauer-local}
  Consider morphisms $\phi_1\colon (\beta(\Alp^*),\Alp^*)\to
  (X,M)$ and $\phi_2\colon (\beta(\Alp^*),\Alp^*)\to (Y,N)$.
  Let $\mathcal{L}$ be the Boolean algebra generated by all the
  $\Alp^*$-languages of the form $L_1,L_2$ and $L_1aL_2$, where $L_1$
  \textup{(}respectively $L_2$\textup{)} is recognised by $\phi_1$
  \textup{(}respectively $\phi_2$\textup{)} and $a\in \Alp$. Then a
  $\Alp^*$-language is recognised by the morphism
  \begin{align*}
    \langle \langle \zeta_a\rangle_{a\in\Alp},\phi_1,\phi_2
    \rangle\colon\beta(\Alp^*)\to \V(X\times
    Y)^{\Alp}\times X\times Y 
  \end{align*}
where $\zeta_a\colon \beta(\Alp^*)\to \V(X\times Y)$ is induced by diagram~\eqref{eq:topo-span-zeta-a} if, and only if, it belongs to $\mathcal{L}$.
\end{theorem}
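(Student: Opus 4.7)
The plan is to reduce the local statement to bookkeeping on top of the computation already carried out in the proof of Theorem~\ref{th:reutenauer-global}. Write $\Phi := \langle \langle \zeta_a\rangle_{a\in\Alp},\phi_1,\phi_2\rangle$ and $Z := \V(X\times Y)^{\Alp}\times X\times Y$. Since $\Alp$ is finite, $Z$ is a Boolean space whose clopen algebra is generated by cylinders depending on a single coordinate, and a clopen subbasis of each $\V(X\times Y)$-factor is provided by the sets $\Dv(C_1\times C_2)$ for clopens $C_1\subseteq X$, $C_2\subseteq Y$ (their complements recover the $\boxa$-sets, and a general clopen of $X\times Y$ is a finite union of rectangles $C_1\times C_2$, which the operator $\Dv$ turns into a finite union). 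Hence a clopen subbasis of $Z$ consists of cylinders of the three forms $\pi_a^{-1}(\Dv(C_1\times C_2))$, $\pi_X^{-1}(C_1)$, and $\pi_Y^{-1}(C_2)$, for $a\in\Alp$ and $C_1,C_2$ clopen.

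First I would handle the ``only if'' direction. A language recognised by $\Phi$ is (after intersection with $\Alp^*$) a Boolean combination of preimages of subbasic clopens of $Z$. Pulling back along $\Phi$, these preimages are respectively $\phi_1^{-1}(C_1)\cap\Alp^* = L_1$, $\phi_2^{-1}(C_2)\cap\Alp^* = L_2$, and $\zeta_a^{-1}(\Dv(C_1\times C_2))\cap\Alp^*$. The crucial identification is that the last one equals $L_1 a L_2$: unpacking diagram~\eqref{eq:topo-span-zeta-a}, for $w\in\Alp^*$ one finds $\zeta_a(w) = \{(\phi_1(u),\phi_2(v))\mid u,v\in\Alp^*,\ uav=w\}$, so $\zeta_a(w)$ meets $C_1\times C_2$ iff $w$ admits a factorisation $uav$ with $u\in L_1$ and $v\in L_2$. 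This is precisely the computation extracted from the proof of Theorem~\ref{th:reutenauer-global}, and shows that every $\Phi$-recognised language belongs to $\mathcal{L}$.

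For the converse, it suffices to show that each of the generators $L_1, L_2, L_1 a L_2$ of $\mathcal{L}$ is recognised by $\Phi$, since the family of $\Phi$-recognised languages is automatically a Boolean algebra (preimage commutes with Boolean operations on clopens). The same three identities read in reverse do the job: $L_1, L_2, L_1 a L_2$ are the pullbacks along $\Phi$ of the clopens $\pi_X^{-1}(C_1)$, $\pi_Y^{-1}(C_2)$ and $\pi_a^{-1}(\Dv(C_1\times C_2))$ of $Z$, respectively. Hence $\mathcal{L}$ is contained in the family of $\Phi$-recognised languages, completing the equivalence.

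The main ``obstacle'', if one can be singled out, is really just conceptual rather than technical: one needs to see that the local content of Theorem~\ref{th:reutenauer-global} is already carried entirely by the identity $\zeta_a^{-1}(\Dv(C_1\times C_2))\cap\Alp^* = L_1 a L_2$, so that both directions of the local statement collapse to the same observation. The finiteness of $\Alp$ is what allows us to bundle all the $\zeta_a$ into a single morphism into the finite product $\V(X\times Y)^{\Alp}$, and this is the only genuinely ``local'' ingredient that was not already present in the proof of the global version.
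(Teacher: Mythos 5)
Your proposal is correct and follows essentially the same route as the paper, which derives the local version directly from the computations in the proof of Theorem~\ref{th:reutenauer-global}: the identity $\zeta_a^{-1}(\Dv(C_1\times C_2))\cap\Alp^*=L_1aL_2$ together with the recognition of $L_1$ and $L_2$ via the projection components, plus the observation that the sets $\Dv(C_1\times C_2)$ generate the clopen algebra of $\V(X\times Y)$. You also correctly note that, because the $\zeta_a$ here are the canonical maps induced by diagram~\eqref{eq:topo-span-zeta-a}, the converse direction does not need the more involved $L_{C_1\times C_2,a}$ decomposition used for arbitrary $\zeta$ in the global proof.
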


%
\section{Ultrafilter equations}\label{s:ultrafilter-equations}
Identifying simple equational bases for the Boolean algebras of languages recognised by Sch\"{u}tzenberger products, in terms of the equational theories of the input 
Boolean algebras, is an important step in studying classes built up by repeated application of quantification or language concatenation. See e.g. \cite{PW1996,BP2009} for examples of such work in the regular setting. 

As a proof-of-concept and first step, we provide a fairly easy to obtain completeness result for the Boolean algebra recognised by the local version of a Sch{\"u}tzenberger product of a space with the one element space. First we introduce notation for the dual construction, see Theorem~\ref{th:reutenauer-local}.
\begin{definition}
\label{def:schutz-BAs}
Let $\B_1$ and $\B_2$ be Boolean algebras of $\Alp^*$-languages closed under quotients. 
We define the \emph{binary Sch\"{u}tzenberger sum} of $\B_1$ and $\B_2$ to be the Boolean algebra of languages
\vspace*{-1cm}

\begin{align*}
\B_1\Dplus \B_2:=\langle \B_1\cup\B_2\cup\{L_1aL_2\mid L_1\in\B_1, \ L_2\in\B_2, \ a\in\Alp\} \rangle.
\end{align*}
\vspace*{-.5cm}

\noindent Note that this Boolean algebra is also closed under quotients.
\end{definition}
Let $\B\subseteq \P(\Alp^*)$ be a Boolean algebra closed under quotients. We give equations for $\B\Dplus 2$. Recall that an equation for a Boolean subalgebra of 
$\P(\Alp^*)$ is a pair $\mu\approx\nu$, where $\mu,\nu\in\beta(\Alp^*)$, and that $L\in\P(\Alp^*)$ \emph{satisfies the ultrafilter equation} $\mu\approx\nu$ provided 
\vspace*{-.7cm}

\begin{align*}
L\in \mu \quad \text{if, and only if,} \quad L\in\nu.
\end{align*}
\vspace*{-.7cm}

\noindent A Boolean subalgebra of $\P(\Alp^*)$ satisfies an ultrafilter equation provided each of its elements satisfies it. For background and more details on equations 
see e.g. \cite{GGP2008,GKP2016,Gehrke2016}. 
Now, set

\vspace*{-.7cm}
\begin{align*}
f_a\colon \Alp^*\otimes\nbb \to \Alp^*, \ (w,i)\mapsto w(a\text{@}i) \quad \text{and} \quad f_r\colon \Alp^*\otimes\nbb \to \Alp^*,\ (w,i)\mapsto w_{|i}=w_0\cdots w_{i-1}
\end{align*}

\vspace*{-.3cm}
\noindent where $a\in\Alp$ and $w(a\text{@}i)$ denotes the word obtained by replacing the $i$th letter of the word $w=w_0\cdots w_{|w|-1}$ by an $a$.

The intuition is that the extension $\beta f_a$ will allow us to \emph{factor} an ultrafilter at an occurrence of the letter $a$, whereas the extension $\beta f_r$ gives us
 access to the prefix of this factorisation.

\begin{definition}\label{d:eq} Let $\mathcal E(\B\Dplus 2)$ denote the set of all equations $\mu\approx\nu$ so that
\begin{itemize}
\item $\mu\approx\nu$ holds in $\B$;
\item for each $\gamma\in\beta(\Alp^*\otimes\nbb)$ so that $\mu=\beta f_a(\gamma)$, there exists $\delta\in\beta(\Alp^*\otimes\nbb)$ such that $\nu=\beta f_a(\delta)$ and the equation $\beta f_r(\gamma)\approx\beta f_r(\delta)$ holds in $\B$;
\item for each $\delta\in\beta(\Alp^*\otimes\nbb)$ so that $\nu=\beta f_a(\delta)$, there exists $\gamma\in\beta(\Alp^*\otimes\nbb)$ such that $\mu=\beta f_a(\gamma)$ and the equation $\beta f_r(\gamma)\approx\beta f_r(\delta)$ holds in $\B$.
\end{itemize}
\end{definition}
\begin{theorem}\label{t:ultrafilter-equations-completeness}
The ultrafilter equations in $\mathcal E(\B\emph{\Dplus} 2)$ characterise the Boolean algebra $\B\emph{\Dplus} 2$.
\end{theorem}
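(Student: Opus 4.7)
The plan is to prove both halves of the characterisation by pivoting on a Vietoris-style reformulation of the generators of $\B \Dplus 2$. Specifically, for every $L \in \P(\Alp^*)$ and $\mu \in \beta(\Alp^*)$, I would first establish the equivalence
\[
  L a \Alp^* \in \mu \ \iff \ \exists\, \gamma \in \beta(\Alp^* \otimes \nbb) \text{ with } \beta f_a(\gamma) = \mu \text{ and } L \in \beta f_r(\gamma), \qquad (\star)
\]
by verifying the set-level identity $L a \Alp^* = f_a[f_r^{-1}(L)]$ and upgrading it to a Stone-dual statement: $\beta f_r^{-1}(\widehat{L})$ is clopen, hence compact, and its image under the continuous map $\beta f_a$ is a closed set that, by the universal property of the Stone-\v{C}ech compactification, coincides with $\widehat{L a \Alp^*}$.

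For soundness I would fix an equation $\mu \approx \nu \in \mathcal{E}(\B \Dplus 2)$ and exploit the fact that the class of languages satisfying it forms a Boolean subalgebra; it is therefore enough to verify the generators of $\B \Dplus 2$, namely elements of $\B$, the constants $\emptyset$ and $\Alp^*$, and languages $L a \Alp^*$ with $L \in \B$ and $a \in \Alp$. The first two cases are handled by clause~(1) of Definition~\ref{d:eq} and triviality respectively. For $L a \Alp^*$: supposing $L a \Alp^* \in \mu$, I would pick a witness $\gamma$ via $(\star)$, apply clause~(2) to get $\delta$ with $\beta f_a(\delta) = \nu$ and $\beta f_r(\gamma) \approx \beta f_r(\delta)$ holding in $\B$; since $L \in \B$ is in $\beta f_r(\gamma)$ it is also in $\beta f_r(\delta)$, and $(\star)$ then yields $L a \Alp^* \in \nu$. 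The reverse implication uses clause~(3).

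For completeness, by Stone duality it suffices to show that whenever $\mu, \nu \in \beta(\Alp^*)$ agree on $\B \Dplus 2$, the equation $\mu \approx \nu$ lies in $\mathcal{E}(\B \Dplus 2)$. Clause~(1) is immediate from $\B \subseteq \B \Dplus 2$. For clause~(2), given $\gamma$ with $\beta f_a(\gamma) = \mu$, I would construct the required $\delta$ by extending to an ultrafilter on $\P(\Alp^* \otimes \nbb)$ the family
\[
  \mathcal{F} := \{ f_a^{-1}(K) \mid K \in \nu \} \cup \{ f_r^{-1}(L) \mid L \in \B \cap \beta f_r(\gamma) \}.
\]
Its finite intersection property follows from a short calculation: for $K \in \nu$ and $L \in \B \cap \beta f_r(\gamma)$, the pair $(vas, |v|)$ with $v \in L$ and $s \in \Alp^*$ lies in $f_a^{-1}(K) \cap f_r^{-1}(L)$ iff $vas \in K$, so the intersection is nonempty precisely when $K \cap L a \Alp^* \neq \emptyset$. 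But $(\star)$ applied to $\gamma$ gives $L a \Alp^* \in \mu$, and the agreement hypothesis promotes this to $L a \Alp^* \in \nu$; hence $K \cap L a \Alp^* \in \nu$ is nonempty since $\nu$ is a proper filter. Any ultrafilter $\delta$ extending $\mathcal{F}$ then satisfies $\beta f_a(\delta) \supseteq \nu$ and $\beta f_r(\delta) \cap \B \supseteq \beta f_r(\gamma) \cap \B$; maximality of ultrafilters forces both inclusions to be equalities, the second taken in $\B$. Clause~(3) follows by symmetry, swapping the roles of $\mu$ and $\nu$.

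The hard part will be isolating the characterisation $(\star)$: it is what translates the algebraic definition of the generators $L a \Alp^*$ into an image/preimage statement for $\beta f_a$ and $\beta f_r$, and thereby makes the Egli--Milner-style conditions in Definition~\ref{d:eq} align exactly with the identifications on $\beta(\Alp^*)$ induced by $\B \Dplus 2$. Once $(\star)$ is in hand, both directions of the theorem reduce to a routine use of Stone duality and a standard filter-extension argument.
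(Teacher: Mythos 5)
Your proposal is correct and follows essentially the same route as the paper: your equivalence $(\star)$, together with the filter-strengthened version you use in the completeness step, is exactly the content of the paper's Lemmas~\ref{l:ultrafiter-image} and~\ref{l:ultrafiter-image-quasi-inverse}, and both your reduction of soundness to the generators $La\Alp^*$ and your saturation-based reduction of completeness to showing that agreement on $\B\Dplus 2$ implies membership in $\mathcal E(\B\Dplus 2)$ match the paper's argument. The only difference is cosmetic: you obtain the forward direction of $(\star)$ topologically, via $\widehat{La\Alp^*}=\beta f_a\bigl[(\beta f_r)^{-1}(\widehat{L})\bigr]$, rather than by the paper's explicit filter-base computation.
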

The proof of Theorem~\ref{t:ultrafilter-equations-completeness} relies on the following two lemmas.
\begin{lemma}\label{l:ultrafiter-image}
Let $\gamma\in\beta(\Alp^*\otimes\nbb)$. If $\mu=\beta f_a(\gamma)$ and $L\in \beta f_r(\gamma)$, then $La\Alp^*\in \mu$.
\end{lemma}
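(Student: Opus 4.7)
The plan is to reduce the claim to a straightforward set-theoretic inclusion on the pre-images of $f_a$ and $f_r$, and then invoke upward closure of the ultrafilter $\gamma$.

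First I would unfold everything using the explicit description of $\beta$ on maps recalled in~\eqref{eq:beta-on-maps}. By that formula, the hypothesis $L\in \beta f_r(\gamma)$ is equivalent to $f_r^{-1}(L)\in \gamma$, and the desired conclusion $La\Alp^*\in \mu=\beta f_a(\gamma)$ is equivalent to $f_a^{-1}(La\Alp^*)\in \gamma$. So the lemma reduces to verifying
\[
f_r^{-1}(L)\ \subseteq\ f_a^{-1}(La\Alp^*),
\]
after which upward closure of the ultrafilter $\gamma$ gives the result.

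The inclusion is immediate from the definitions of the two maps: pick $(w,i)\in f_r^{-1}(L)$, so $f_r(w,i)=w_0\cdots w_{i-1}\in L$. Writing $f_a(w,i)=w(a\text{@}i)=w_0\cdots w_{i-1}\,a\,w_{i+1}\cdots w_{|w|-1}$, we see that this word has $w_0\cdots w_{i-1}\in L$ as a prefix, followed by the letter $a$, followed by a (possibly empty) suffix in $\Alp^*$. Hence $f_a(w,i)\in La\Alp^*$, which is exactly what is needed.

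Since the bulk of the argument is just chasing definitions, I do not anticipate any real obstacle. The only thing to be mindful of is that $(w,i)\in \Alp^*\otimes\nbb$ satisfies $i<|w|$, so the $i$th letter position does exist and the decomposition of $f_a(w,i)$ above is well-defined; the suffix $w_{i+1}\cdots w_{|w|-1}$ may be empty (when $i=|w|-1$), which is fine since then $f_a(w,i)\in La\subseteq La\Alp^*$.
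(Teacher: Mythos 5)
Your proposal is correct and follows essentially the same route as the paper's own proof: both reduce the claim, via the description of $\beta$ on maps in~\eqref{eq:beta-on-maps}, to the inclusion $f_r^{-1}(L)\subseteq f_a^{-1}(La\Alp^*)$ and then conclude by upward closure of the ultrafilter $\gamma$. Your explicit decomposition of $f_a(w,i)$ into prefix, marked letter, and suffix just spells out the inclusion that the paper states in one line.
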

\begin{lemma}\label{l:ultrafiter-image-quasi-inverse}
Let $\mathscr{F}\subseteq \P(\Alp^*)$ be a proper filter, $\mu\in\beta(\Alp^*)$ and $a\in\Alp$. If $La\Alp^*\in\mu$ for all $L\in\mathscr{F}$, then there exists $\gamma\in\beta(\Alp^*\otimes \nbb)$ such that $\mu=\beta f_a(\gamma)$ and $\mathscr{F}\subseteq \beta f_r(\gamma)$.
\end{lemma}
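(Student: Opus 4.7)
The plan is to construct $\gamma$ as an ultrafilter extending a suitably chosen filter basis on $\Alp^*\otimes\nbb$. Specifically, I would consider the family
\[
\mathcal{G}:=\{f_a^{-1}(N)\mid N\in\mu\}\cup\{f_r^{-1}(L)\mid L\in\mathscr{F}\}
\]
of subsets of $\Alp^*\otimes\nbb$. If I can show that $\mathcal{G}$ has the finite intersection property, then it generates a proper filter on $\Alp^*\otimes\nbb$, which by the ultrafilter theorem extends to an ultrafilter $\gamma\in\beta(\Alp^*\otimes\nbb)$. By construction, every $f_a^{-1}(N)$ with $N\in\mu$ belongs to $\gamma$, so by the defining equation~\eqref{eq:beta-on-maps} we get $\mu\subseteq\beta f_a(\gamma)$; since both are ultrafilters this forces equality $\mu=\beta f_a(\gamma)$. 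Similarly, every $f_r^{-1}(L)$ with $L\in\mathscr{F}$ lies in $\gamma$, giving $\mathscr{F}\subseteq\beta f_r(\gamma)$.

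The heart of the argument — and the only place where the hypothesis $La\Alp^*\in\mu$ is used — is checking the finite intersection property. Given finitely many $N_1,\dots,N_k\in\mu$ and $L_1,\dots,L_m\in\mathscr{F}$, let $N:=N_1\cap\cdots\cap N_k$ (which lies in $\mu$, since $\mu$ is closed under finite meets) and $L:=L_1\cap\cdots\cap L_m$ (which lies in $\mathscr{F}$, since $\mathscr{F}$ is a filter). Then
\[
\bigcap_{i=1}^{k} f_a^{-1}(N_i)\cap\bigcap_{j=1}^{m} f_r^{-1}(L_j)=f_a^{-1}(N)\cap f_r^{-1}(L),
\]
so it suffices to exhibit a single pair $(w,i)\in\Alp^*\otimes\nbb$ with $f_a(w,i)\in N$ and $f_r(w,i)\in L$. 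By hypothesis $La\Alp^*\in\mu$, hence $N\cap La\Alp^*\in\mu$, and in particular this intersection is non-empty: pick any $v$ in it and write $v=uav'$ with $u\in L$ and $v'\in\Alp^*$. Setting $w:=v$ and $i:=|u|$, we have $i<|w|$ so $(w,i)\in\Alp^*\otimes\nbb$, while $f_r(w,i)=w_{|i}=u\in L$ and $f_a(w,i)=w(a@i)=v\in N$ (since the $i$-th letter of $w$ is already $a$, replacing it by $a$ leaves $w$ unchanged).

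There is no real obstacle beyond this witness construction: once the finite intersection property is in hand, the rest is a direct application of the ultrafilter theorem together with the fact that an ultrafilter containing another ultrafilter must coincide with it. The only subtlety worth double-checking is the indexing convention — that $f_r(w,i)$ is a prefix of length exactly $i$ (not $i+1$), so that when $v=uav'$ with $|u|=i$ the prefix $w_{|i}$ equals $u$ rather than $ua$ — but this matches the definition of $f_r$ in the excerpt. With the filter extension $\gamma$ in place, the two desired properties $\mu=\beta f_a(\gamma)$ and $\mathscr{F}\subseteq\beta f_r(\gamma)$ follow immediately from the defining equation~\eqref{eq:beta-on-maps} for $\beta$ applied to a map.
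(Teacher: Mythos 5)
Your proof is correct and follows essentially the same route as the paper's: both reduce to showing that each set $f_a^{-1}(N)\cap f_r^{-1}(L)$ with $N\in\mu$, $L\in\mathscr{F}$ is non-empty, and both produce the same witness $(w,i)$ from an element of $N\cap La\Alp^*$. The only cosmetic difference is that you start from the subbasis $\mathcal{G}$ and then pass to pairwise intersections, which the paper does implicitly.
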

\begin{proof}[Proof Idea for Theorem \ref{t:ultrafilter-equations-completeness}]
Soundness follows easily from the lemmas. For completeness notice that, by repeated use of compactness, $K\in\P(\Alp^*)$ belongs to $\B\Dplus 2$ if and only if for each $\mu\in\widehat{K}$, the clopen $\widehat{K}$ extends the set
\vspace*{-.6cm}

\begin{align*}
C_\mu:=\bigcap\{\widehat{L}\mid L\in\B,\ L\in\mu\}&\cap \bigcap\{\widehat{La\Alp^*}\mid a\in\Alp, L\in\B, La\Alp^*\in\mu\}\\
&\cap \bigcap\{(\widehat{La\Alp^*})^c \mid a\in\Alp, L\in\B, La\Alp^*\notin\mu\}.
\end{align*}
\vspace*{-.6cm}

\noindent Finally one shows, again using the lemmas, that $\mu\approx\nu\in\mathcal E(\B\Dplus 2)$ for any $\nu\in C_\mu$.
\end{proof}

\section{Conclusion}

In \cite{GGP2008} the concepts of recognition and of syntactic monoid,
stemming from the algebraic theory of regular languages, were seen to
naturally arise in the setting of Stone/Priestley duality for Boolean
algebras and lattices with additional operations. Reasoning by
analogy this lead in \cite{GGP2010} to the formulation of
generalisations, for arbitrary languages of finite words, of
recognition and syntactic objects in the setting of monoids equipped
with uniform space structures (so called \emph{semiuniform monoids}).
In this paper we naturally arrive at an isomorphic notion of
recogniser --- Boolean spaces with internal monoids --- which 
is however more amenable to existing tools from duality theory.


Our first contribution is setting up the right framework that allows us to
extend to the non-regular setting algebraic constructions whose
logical counterpart is adding a layer of quantifier depth.
We should mention that both the Sch{\"u}tzenberger and the block
product are algebraic constructions that can be used for this purpose
in the regular case. However, for technical reasons, extending the
former to Boolean spaces with internal monoids is more natural.
The unary Sch{\"u}tzenberger product that we introduce
 (which actually does not appear in the (pro)finite monoid
literature to the best of our knowledge) arises naturally via duality
for the Boolean algebra with quotients generated by the languages
$L_\exists$, for $L$ coming from some Boolean algebra $\B$. For lack
of space, we have not included this fairly involved dual computation
but have opted for introducing our product by analogy with the well-known 
one of Sch{\"u}tzenberger. 
Moreover, our framework can be easily extended to the case of bounded
distributive lattices, one would just need to use instead the Vietoris
functor on spectral spaces.

Furthermore, Theorem~\ref{th:recognised-by-diamond-X} of
Section~\ref{ss:unary-schutz} and Theorem~\ref{th:reutenauer-global}
of Section~\ref{s:binary-schutz-product}, provide
characterisations of the languages accepted by our unary and binary 
Sch{\"u}tzenberger products of Boolean spaces.
%
Finally, in Section~\ref{s:ultrafilter-equations} we derive a
preliminary result on equations. 
Theorem~\ref{t:ultrafilter-equations-completeness} on equational
completeness is by no means the final word, but rather a first stepping
stone in this direction. In the regular setting, as well as in the
special cases treated in \cite{GKP2016} and \cite{CK2016}, much smaller
subsets of $\mathcal E(\B\Dplus 2)$ have been shown to provide
complete axiomatisations. We expect that a notion akin to the derived
categories of profinite monoid theory \cite{Tilson1987} have to be
developed, and we expect the remainder of the Stone-\v Cech
compactification to play a key r\^ole in this.





\newpage

\appendix
\section{Addenda to Section \ref{s:recognition-spaces-dense-monoids}}\label{a:appendix-recognition-spaces}
We first provide more details regarding the connection between the
notion of Boolean space with an internal monoid
(Definition~\ref{d:spaces-with-internal-monoids}) and that of
semiuniform monoid~\cite{GGP2010}, as outlined in the Remark on
page~\pageref{remark-pervin}.
\begin{remark}
 As it was shown 
in \cite[Theorem~1.6]{GGP2010}, if $(M,\mathcal{U})$ is a semiuniform 
monoid, then its uniform completion $X$ is a Boolean space containing $M$ 
as a dense subspace. Also, by uniform continuity, the biaction of $M$ on 
itself has a unique extension to a biaction with continuous components on 
$X$. Thus $(X,M)$ is a Boolean space with an internal monoid. 

Conversely, given a Boolean space with an internal monoid $(X,M)$, since
preimages of clopens under the components of the actions of $M$ on X are
clopens, the actions of $M$ on itself are uniformly continuous with respect
to the Pervin uniformity $\mathcal{U}$ on $M$ given by the Boolean algebra 
$\B=\{C\cap M\mid C\ \text{is clopen in}\ X\}$. Thus $(M,\mathcal{U})$ is a 
semiuniform monoid. It is not hard to see that these two constructions are
inverse to each other.

\end{remark}

\begin{proof}[Proof of Lemma \ref{l:morphism-are-action-preserving}]
We shall only prove 
\begin{align}\label{eq:left-action-is-preserved}
f\circ \lambda_{m}=\lambda_{f(m)}\circ f
\end{align}
for all $m\in M$, since the proof for the right action is the same, mutatis mutandis.
For arbitrary elements $m,m'\in M$, note that
\begin{align*}
(f\circ \lambda_{m})(m')&=f(m\cdot m') \\
&=f(m)\cdot f(m') \\
&=(\lambda_{f(m)}\circ f)(m').
\end{align*}
In other words $f\circ \lambda_{m}$ and $\lambda_{f(m)}\circ f$ coincide on $M$.
It is well-known that, if two continuous maps into a Hausdorff space coincide on a dense subspace of the domain, then they are equal. 
Hence, $M$ being dense in $X$,~\eqref{eq:left-action-is-preserved} is proved.
\end{proof}

%
%

%
%

%
%
%

%
\section{Addenda to Section \ref{s:unary-schutz-product}}\label{a:appendix-unary-schutz-product}

\subsection{The Vietoris construction}\label{ss:Vietoris}
For any topological space $X$, denote by $\V(X)$ the collection of all closed subsets of $X$. Further, given $V\subseteq X$, set
\begin{align*}
\Dv V:=\{K\in\V(X)\mid K\cap V\neq \emptyset\}, \quad \text{and} \quad \boxa V:=\{K\in\V(X)\mid K\subseteq V\}.
\end{align*}
The set $\V(X)$, equipped with the topology\footnote{This is known in the literature as the \emph{exponential}, or \emph{finite}, topology on the space of closed subsets
of $X$.} having 
\begin{align*}
\{\Dv V\mid V\subseteq X \ \text{is open}\}\cup\{\boxa V\mid V\subseteq X \ \text{is open}\}
\end{align*}
as a subbasis of open sets, is called the \emph{Vietoris space} of $X$. 
Since the operator $\boxa$ preserves intersections (while $\Dv$ preserves unions), a basic open set for the latter topology is of the form 
$(\bigcap_{i=1}^{n-1}\Dv V_i)\cap \boxa V_n$, where $V_1,\ldots,V_n$ are open subsets of $X$.
%
%
%
We further note that, for any subset $V\subseteq X$, $\boxa V=(\Dv V^c)^c$.

The Vietoris construction preserves several topological properties of the space $X$ (the interested
reader is referred to \cite[\S 4]{Michael1951} for a complete account). The following preservation result is central in our treatment.
\begin{theorem}[{\cite[Theorem 4.9 p.\ 163]{Michael1951}}]
If $X$ is a Boolean space, then so is $\V(X)$. In this case, the topology of $\V(X)$ admits as a subbasis of clopen sets the collection
\begin{align*}
\{\Dv V\mid V\subseteq X \ \text{is clopen}\}\cup\{\boxa V\mid V\subseteq X \ \text{is clopen}\}.
\end{align*}
\end{theorem}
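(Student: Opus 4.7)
My plan is to verify two claims: (i) the family $\mathcal{S}:=\{\Dv V \mid V \subseteq X \text{ clopen}\} \cup \{\boxa V \mid V \subseteq X \text{ clopen}\}$ consists of clopen subsets of $\V(X)$ and forms a subbasis for the Vietoris topology; and (ii) $\V(X)$ is compact Hausdorff. Together these yield that $\V(X)$ is Boolean, with a basis of clopens given by finite intersections of elements of $\mathcal{S}$.

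For (i), the starting observation is the identity $\boxa V = \V(X) \setminus \Dv V^c$, valid for any $V \subseteq X$ since $K \subseteq V$ iff $K \cap V^c = \emptyset$. When $V$ is clopen, so is $V^c$, so both $\Dv V$ and $\boxa V$ are subbasic open sets of the Vietoris topology and each is the complement of such a set, hence clopen. To show that $\mathcal{S}$ actually generates the Vietoris topology (whose subbasis, by definition in Section~\ref{ss:Vietoris}, uses arbitrary opens), I would check that every $\Dv U$ and every $\boxa U$ with $U$ open in $X$ is a union of elements of the topology generated by $\mathcal{S}$. Writing $U = \bigcup_\alpha V_\alpha$ with the $V_\alpha$ clopen (possible since $X$ is Boolean) gives $\Dv U = \bigcup_\alpha \Dv V_\alpha$ at once. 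For $\boxa U$, the key point is that any $K \in \boxa U$ is a closed, hence compact, subset of $X$, so there is a finite subfamily with $K \subseteq V_{\alpha_1} \cup \ldots \cup V_{\alpha_n} =: W \subseteq U$; the set $W$ is itself clopen and $K \in \boxa W$, so $\boxa U$ is a union of such $\boxa W$'s.

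Hausdorffness follows from a standard separation argument: if $K_1 \neq K_2$ in $\V(X)$, pick (say) $x \in K_1 \setminus K_2$, and use Boolean-ness of $X$ together with closedness of $K_2$ to choose a clopen $V \subseteq X$ with $x \notin V$ and $K_2 \subseteq V$. Then $\boxa V$ and $\Dv V^c$ are disjoint open neighbourhoods of $K_2$ and $K_1$, respectively.

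The main obstacle is compactness, which I would establish through Alexander's subbasis theorem applied to $\mathcal{S}$. Given a subbasic open cover $\{\Dv V_i\}_{i \in I} \cup \{\boxa W_j\}_{j \in J}$ of $\V(X)$, define $F := X \setminus \bigcup_i V_i$, a closed subset of $X$ and hence an element of $\V(X)$. Since $F \cap V_i = \emptyset$ for every $i$, $F \notin \Dv V_i$ for any $i$, so $F \in \boxa W_{j_0}$ for some $j_0 \in J$; equivalently, $W_{j_0}^c \subseteq \bigcup_i V_i$. As $W_{j_0}^c$ is closed in the compact space $X$, finitely many $V_{i_1}, \ldots, V_{i_n}$ already cover it, and I claim $\{\Dv V_{i_1}, \ldots, \Dv V_{i_n}, \boxa W_{j_0}\}$ covers $\V(X)$: any $K \in \V(X)$ either satisfies $K \subseteq W_{j_0}$, in which case $K \in \boxa W_{j_0}$, or meets $W_{j_0}^c$, in which case it meets some $V_{i_k}$ and lies in $\Dv V_{i_k}$. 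Alexander's theorem then delivers compactness, completing the proof.
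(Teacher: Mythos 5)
Your proof is correct, but note that the paper does not prove this statement at all: it is imported verbatim from Michael's 1951 paper (Theorem 4.9 there), so there is no internal proof to compare against. Your argument is the standard self-contained one, and all the delicate points check out: the identity $\boxa V=(\Dv V^c)^c$ does give clopenness of both generators; the reduction of $\boxa U$ for open $U$ to clopen $\boxa W$'s correctly uses compactness of the closed set $K$ inside the compact space $X$; the Hausdorff separation via a clopen $V\supseteq K_2$ missing $x$ is the usual finite-subcover argument in a Boolean space; and the Alexander subbasis step is sound, including the implicit edge cases --- a cover consisting only of sets $\Dv V_i$ can never cover $\V(X)$ because the closed set $F=X\setminus\bigcup_i V_i$ (and in particular $\emptyset$, if the paper's convention admitting the empty closed set is used) lies in no $\Dv V_i$, which is exactly what forces the existence of the index $j_0$ your finite subcover is built around. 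One could quibble that you should say explicitly that (i) licenses applying Alexander's lemma to $\mathcal{S}$ rather than to the defining subbasis with arbitrary opens, but your argument in fact works verbatim for either subbasis, since it only uses that $W_{j_0}$ is open and hence $W_{j_0}^c$ is compact. Michael's original treatment derives this from more general results on the finite topology on the hyperspace $2^X$ of an arbitrary compact Hausdorff (respectively zero-dimensional) space; your route is more elementary and is the one usually given when only the Boolean case is needed, as here.
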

Henceforth, we shall assume that $X,Y$ are Boolean spaces. However, we remark that all the following facts hold in more generality.
Firstly, observe that the map 
\begin{align}\label{eq:embedding-of-X-into-vietoris}
\eta\colon X\to \V(X), \ x\mapsto\{x\}
\end{align}
is a continuous embedding of $X$ into its Vietoris space.
Secondly, if $f\colon X\to Y$ is a continuous map then the forward image function
\begin{align}\label{eq:vietoris-functor-on-maps}
\V(f)\colon \V(X)\to\V(Y), \ K\mapsto f[K]
\end{align}
is also continuous \cite[Theorem 5 p.\ 163]{Kuratowski1}.
Lastly, the following lemma shows that the Vietoris construction may be regarded as a generalisation of the finite power set.
\begin{lemma}[{\cite[Theorem 4 p.\ 163]{Kuratowski1}}]\label{l:finite-powerset-dense-in-vietoris}
If $X$ is a Boolean space, then $\Pfin(X)$ is dense in $X$. Therefore, if $Z$ is a dense subspace of $X$, then $\Pfin(Z)$ is dense in $X$.
\end{lemma}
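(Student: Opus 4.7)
The plan is to work directly from the subbasis description of the Vietoris topology recalled in the preceding theorem. A basic open set of $\V(X)$ has the form
\[
U \;=\; \boxa V_0 \,\cap\, \bigcap_{i=1}^{n} \Dv V_i,
\]
with $V_0, V_1, \ldots, V_n$ clopen in $X$ (allowing $n = 0$). To prove density of $\Pfin(X)$ in $\V(X)$, it suffices to exhibit, for each such non-empty $U$, a finite closed set $F \in U$.

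First I would handle the density of $\Pfin(X)$ in $\V(X)$. Assume $U \neq \emptyset$ and pick a witness $K \in U$, i.e.\ a closed set $K \subseteq V_0$ with $K \cap V_i \neq \emptyset$ for $i = 1, \ldots, n$. For each such $i$, choose a point $x_i \in K \cap V_i$ and set $F := \{x_1, \ldots, x_n\}$ (taking $F = \emptyset$ if $n = 0$). Then $F$ is finite, hence a closed subset of $X$, and by construction $F \subseteq K \subseteq V_0$ while $x_i \in F \cap V_i$ for each $i \geq 1$; hence $F \in U \cap \Pfin(X)$. (In the corner case $n = 0$, the empty set already lies in $\boxa V_0 = U$, and $\emptyset \in \Pfin(X)$.)

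For the second statement, let $Z \subseteq X$ be dense and let $U$ be a non-empty basic open as above. Using the witness $K$, the relation $\emptyset \neq K \cap V_i \subseteq V_0 \cap V_i$ shows that $V_0 \cap V_i$ is a non-empty open subset of $X$ for every $i \geq 1$; by density of $Z$ we can pick $z_i \in Z \cap V_0 \cap V_i$. Setting $F := \{z_1, \ldots, z_n\}$ (or $F = \emptyset$ if $n = 0$) yields an element of $\Pfin(Z)$ that lies in $U$ by the same verification as before. Since $\Pfin(Z) \subseteq \V(X)$ meets every non-empty basic open, it is dense in $\V(X)$.

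There is no real obstacle here: the argument is essentially ``replace a closed witness by the finite set of one chosen point per diamond constraint.'' The only subtlety worth flagging explicitly is the corner case $n = 0$, where $\emptyset \in \V(X)$ lies in $\boxa V_0$ for \emph{any} open $V_0$ and is a member of both $\Pfin(X)$ and $\Pfin(Z)$, so the argument goes through uniformly.
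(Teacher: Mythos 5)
Your argument is correct. The paper gives no proof of this lemma at all---it is imported verbatim from Kuratowski (Theorem~4, p.~163)---so there is nothing internal to compare against; your direct check against basic opens of the form $\boxa V_0\cap\bigcap_{i=1}^{n}\Dv V_i$, replacing a closed witness $K$ by one chosen point per $\Dv$-constraint (drawn from $Z\cap V_0\cap V_i$ in the relative version), is precisely the standard proof of the cited fact, and your handling of the corner case $F=\emptyset$ is right since $\emptyset\in\boxa V_0$ and $\emptyset\notin\Dv V$ for every $V$. Note only that the statement's ``dense in $X$'' is a typo for ``dense in $\V(X)$'', which you have implicitly and correctly repaired.
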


\subsection{Proofs for Section \ref{s:unary-schutz-product}}
\begin{proof}[Proof of Proposition \ref{prop:monoid-blprod}]
Define the map $\xi\colon\Alp^*\to \Ds M$ as the pairing of
  $\xi_1\colon\Alp^*\to\Pfin(M)$ from~\eqref{eq:rel-PM}, and $\tau\circ
  \gamma_0\colon\Alp^*\to M$. Explicitly,
\[
w\mapsto (\{\tau(w^{(i)})\mid 0\le i< |w|\},\tau(w^0)).
\]
The latter is a monoid morphism since, for all $v,w\in\Alp^*$, 
\begin{align*}
\xi(v)*\xi(w)&=(\{\tau(v^{(i)})\ |\ 0\leq i< |v|\},\tau(v^0))*(\{\tau(w^{(i)})\ |\ 0\leq i< |w|\},\tau(w^0))\\
&=(\{\tau(v^{(i)})\ |\ 0\leq i< |v|\}\cdot\tau(w^0)\cup\tau(v^0)\cdot \{\tau(w^{(i)})\ |\ 0\leq i< |w|\},\tau(v^0)\cdot\tau(w^0))\\
&=(\{\tau(v^{(i)})\cdot\tau(w^0)\ |\ 0\leq i< |v|\}\cup \{\tau(v^0)\cdot\tau(w^{(i)})\ |\ 0\leq i< |w|\},\tau(v^0w^0))\\
&=(\{\tau((vw)^{(i)})\ |\ 0\leq i< |v|\}\cup \{\tau((vw)^{(i+|v|)})\ |\ 0\leq i< |w|\},\tau((vw)^0))\\
&=(\{\tau((vw)^{(i)})\ |\ 0\leq i< |v|+|w|\},\tau((vw)^0))=\xi(vw).
\end{align*}
In order to see that $\xi$ recognises the language $L_{\exists x.\Phi}$, consider a subset $V\subseteq M$ such that $L_\Phi=\tau^{-1}(V)$, and
set $\Dv V:=\{S\in\Pfin(M)\mid S\cap V\neq\emptyset\}$. Then 
\begin{align*}
\xi^{-1}(\Dv V\times M)&=\{w\in \Alp^*\mid \{\tau(w^{(i)})\mid 0\leq i< |w|\}\in\Dv V\} \\
&=\{w\in \Alp^*\mid \{\tau(w^{(i)})\mid 0\leq i< |w|\}\cap V\neq\emptyset\} \\
&=\{w\in \Alp^*\mid \exists 0\leq i< |w| \ \text{s.t.} \ w^{(i)}\in \tau^{-1}(V)\}=L_{\exists x.\Phi}.\qedhere
\end{align*}
\end{proof}
%

%
%
%

%
\begin{proof}[Proof of Lemma \ref{l:unary-schutz-product-is-a-bin}]
In view of Lemma~\ref{l:finite-powerset-dense-in-vietoris}, $\Pfin(M)$ is a dense subspace of $\V(X)$. 
Thus the monoid $\Ds M$ is a dense subspace of $\Ds X$.
We show that, for each $S\in\Pfin(M)$ and $m\in M$, the function $l_{(S,m)}\colon\Ds X\to \Ds X$ given by
\begin{align*}
l_{(S,m)}(K,x):=(\{\lambda_s(x)\mid s\in S\}\cup\lambda_m[K],\lambda_m(x))
\end{align*}
is continuous. It is clear that the above map extends the left action of $\Ds M$ on itself. Uniqueness will then follow automatically from continuity.
The continuity of the right action can be proved in a similar fashion.

Note that it suffices to prove that $(\pi_1\circ l_{(S,m)})^{-1}(\Dv V)$ is clopen whenever $V\subseteq X$ is clopen, 
where $\pi_1\colon \Ds X\to \V(X)$ is the first projection. Then
\begin{align*}
(\pi_1\circ l_{(S,m)})^{-1}(\Dv V)&=\{(K,x)\in\V(X)\times X\mid (\{\lambda_s(x)\mid s\in S\}\cup \lambda_m[K])\cap V\neq\emptyset\} \\
&=\{(K,x)\in\V(X)\times X\mid \exists s\in S \ \text{s.t.} \ \lambda_s(x)\in V\}\cup (\Dv \lambda_m^{-1}(V)\times X) \\
&=(\V(X)\times \bigcup_{s\in S}\lambda_{s}^{-1}(V))\cup(\Dv \lambda_m^{-1}(V)\times X)
\end{align*}
showing $(\pi_1\circ l_{(S,m)})^{-1}(\Dv V)$ as a clopen in $\Ds X$.
\end{proof}
\begin{proof}[Proof of Proposition \ref{prop:recog-ex-unary-schutz}]
The map $\xi\colon\beta(\Alp^*)\to\Ds X$ can be defined as the pairing of the map $\xi_1\colon\beta(\Alp^*)\to\V(X)$ from~\eqref{eq:map-xi-1-for-spaces} with 
$\tau\circ\beta\gamma_0$. This is clearly continuous, and it restricts to a monoid morphism $\Alp^*\to\Ds M$ by (the proof of) Proposition \ref{prop:monoid-blprod}.

If the morphism $\tau$ recognises the language $L_{\Phi}$ through the clopen $V\subseteq X$, it is easy to see that $\xi$ recognises the language $L_{\exists x.\Phi}$
 through the clopen $\Dv V\times X$.
\end{proof}
\begin{proof}[Proof of Theorem \ref{th:recognised-by-diamond-X}]
\emph{Right-to-left:} pick a language $L\in \B(X,\Alp)$. Then there is a clopen $V\subseteq X$ and a morphism 
$f\colon(\beta(\Alp^+),\Alp^+)\to (X,M)$ satisfying
$\wL=f^{-1}(V)$. Define $g\colon\beta(\Alp^+)\to \Ds X$ as the composition
\begin{equation*}
\begin{tikzcd}
\beta(\Alp^+) \arrow{r}{\langle f,f\rangle} & X\times X \arrow{r}{\eta\times id_X} & \V(X)\times X
\end{tikzcd}
\end{equation*}
where $\eta\colon X\to\V(X)$ is the canonical embedding from~\eqref{eq:embedding-of-X-into-vietoris}. 
Since clearly $g^{-1}(\V(X)\times V)=\wL$, it is enough to show that $g$ restricts
to a semigroup morphism $\Alp^+\to \Ds M$.
For each $w,w'\in\Alp^+$
\begin{align*}
g(w)\cdot g(w')&=(\{f(w)\},f(w))\ast (\{f(w')\},f(w')) \\
&=(\{f(w)\}\cdot f(w') \cup f(w)\cdot\{f(w')\}, f(ww')) \\
&=(\{f(ww')\},f(ww'))=g(ww').
\end{align*}

On the other hand, if $L\in \B(X,\Atw)$ we have $\wL=f^{-1}(V)$ for some morphism $f\colon \beta(\Atw)^+\to X$ and some clopen $V\subseteq X$. 
Consider the clopen subset $\Dv V$ of $\V(X)$.
We claim that the map $\zeta:=\langle f\circ\beta\gamma_1\circ(\beta\pi)^{-1},f\circ\beta\gamma_0\rangle\colon \beta(\Alp^+)\to \Ds X$ 
recognises $L_{\exists}$ through the clopen $\Dv V\times X$.

  In fact it suffices to show that
  $(f\circ\beta\gamma_1\circ(\beta\pi)^{-1})^{-1}(\Dv V)=\widehat{L_{\exists}}$, where we recall that $L_{\exists}:=\pi(\gamma_1^{-1}(L))$. 
  This is done in the following computation.
  \begin{align*}
    (f\circ\beta\gamma_1\circ(\beta\pi)^{-1})^{-1}(\Dv V) \cap \Alp^* &=\{w\in \Alp^+\mid (f\circ\beta\gamma_1\circ(\beta\pi)^{-1})(\uparrow w)\cap V\neq\emptyset\} \\
    &=\{w\in \Alp^+\mid (\beta\gamma_1\circ(\beta\pi)^{-1})(\uparrow w)\cap \wL\neq\emptyset\}  \\
    &=\{w\in \Alp^+\mid \uparrow w\in \widehat{L_{\exists}}\}=L_{\exists}.
  \end{align*}
The fact that $\zeta$ restricts to a semigroup morphism follows at once from the monoid case (see Proposition \ref{prop:monoid-blprod}). 

\medskip
\emph{Left-to-right:} it is enough to prove the statement for every $L\in \B(\Ds X,\Alp)$ satisfying $\wL=f^{-1}(\Dv V\times C)$,
where $f\colon (\beta(\Alp^+),\Alp^+)\to (\Ds X,\Ds M)$ is a morphism and $V,C$ are clopens of $X$. 
If $f=\langle \sigma,h\rangle$, then
\begin{align*} 
f^{-1}(\Dv V\times C)=\sigma^{-1}(\Dv V)\cap h^{-1}(C). 
\end{align*}
Since the projection on the second component $\Ds X\to X$ is a morphism, $h^{-1}(C)\in \B(X,\Alp)$.
We will prove $\sigma^{-1}(\Dv V)\in \B(X,\Atw)_\exists$, and this will complete the proof.

Note that $\sigma$ restricts to a map $\Alp^+\to \Pfin^{+}(M)$, hence we can define a finite non-empty set $I:=\prod_{a\in\Alp} \sigma(a)$.
Each $m=(m_a)_{a\in\Alp}\in I$ defines a semigroup morphism $\tau_m\colon (\Atw)^+\to M$ whose behaviour on the generators is given by
\begin{align*}
\tau_m(a,0):=h(a), \quad \tau_m(a,1):=m_a.
\end{align*}
By the universal property~\eqref{eq:stone-cech-universal-property} of the Stone-\v{C}ech compactification, the maps $\tau_m$ can be uniquely extended to continuous functions $\beta(\Atw)^+\to X$ that
we denote again by $\tau_m$. It is clear that the latter maps are morphisms $(\beta(\Atw)^+,(\Atw)^+)\to (X,M)$. We claim that
\begin{align}\label{eq:sem-claim}
\sigma^{-1}(\Dv V)=\bigcup_{m\in I} (\tau_m^{-1}(V))_{\exists}.
\end{align}
Since each $\tau_m^{-1}(V)$ belongs to $\B(X,\Atw)$, this will exhibit $\sigma^{-1}(\Dv V)$ as a finite union of elements of $\B(X,\Atw)_{\exists}$.

Now, by a straightforward translation of a fact noticed in \cite[p.\ 261]{Reutenauer1979}, for any $w\in\Alp^+$
\begin{align*}
\sigma(w)=\bigcup_{\substack{u,v\in \Alp^+ \\ a\in \Alp \\ w=uav}}h(u)\sigma(a)h(v) \cup \bigcup_{\substack{u\in \Alp^+ \\ a\in \Alp \\ w=ua}}h(u)\sigma(a)\cup
\bigcup_{\substack{v\in \Alp^+ \\ a\in \Alp \\ w=av}}\sigma(a)h(v) \cup \bigcup_{\substack{a\in \Alp \\ w=a}}\sigma(a).
\end{align*}
Thus
\begin{align*}
\sigma^{-1}(\Dv V)= \ &\{w\in\Alp^+\mid \exists a\in\Alp, \exists u,v\in\Alp^+ \ \text{s.t.} \ w=uav, \exists m_a\in \sigma(a) \ \text{s.t.} \ h(u)m_a h(v)\in V\} \\
&\cup \{w\in\Alp^+\mid \exists a\in\Alp, \exists u\in\Alp^+ \ \text{s.t.} \ w=ua, \exists m_a\in \sigma(a) \ \text{s.t.} \ h(u)m_a \in V\} \\
&\cup \{w\in\Alp^+\mid \exists a\in\Alp, \exists v\in\Alp^+ \ \text{s.t.} \ w=av, \exists m_a\in \sigma(a) \ \text{s.t.} \ m_a h(v) \in V\} \\
&\cup \{w\in\Alp^+\mid \exists a\in\Alp \ \text{s.t.} \ w=a, \exists m_a\in \sigma(a) \ \text{s.t.} \ m_a \in V\} \\
= \ &\{w\in \Alp^+\mid \exists m\in I, \exists 0\leq n<|w| \ \text{s.t.} \ \tau_m(w^{(n)})\in V\}=\bigcup_{m\in I} (\tau_m^{-1}(V))_{\exists}
\end{align*}
and~\eqref{eq:sem-claim} is proved.
\end{proof}

\section{Addenda to Section \ref{s:binary-schutz-product}}\label{a:appendix-binary-schutz-product}
\begin{proof}[Proof of Lemma \ref{l:action-on-binary-schutz-continuous}]
We show that the components of the left action are continuous, the proof for the right action being the same, mutatis mutandis.
It suffices to prove that the map
\begin{align*}
g\colon \V(X\times Y)\times Y \to \V(X\times Y), \ (Z,y)\mapsto m_1 Z\cup S y
\end{align*}
is continuous, for every $m_1\in M$ and $S\in\Pfin(M\times N)$. Let $L_1, L_2$ be clopens in $X$ and $Y$, respectively. Then
\begin{align*}
g^{-1}(\boxa (L_1\times L_2))&=\{(Z,y)\in \V(X\times Y)\times Y\mid m_1 Z\cup S y\subseteq L_1\times L_2\} \\
&=\{(Z,y)\in \V(X\times Y)\times Y\mid m_1 Z\subseteq L_1\times L_2, \ S y\subseteq L_1\times L_2\}.
\end{align*}
Observe that
\begin{align*}
m_1 Z=\{(\lambda_{m_1}(x),y)\in X\times Y\mid (x,y)\in Z\}\subseteq L_1\times L_2 &\Longleftrightarrow \\
y\in \lambda_{m_1}^{-1}(L_1) \ \text{and} \ y\in L_2, \ \forall (x,y)\in Z &\Longleftrightarrow \\
Z\subseteq  \lambda_{m_1}^{-1}(L_1)\times L_2.
\end{align*}
Similarly,
\begin{align*}
S y=\{(m,\lambda_{n}(y))\in X\times Y\mid (m,n)\in S\}\subseteq L_1\times L_2 &\Longleftrightarrow \\
m\in L_1 \ \text{and} \ y\in \lambda_{n}^{-1}(L_2), \, \, \forall (m,n)\in S &\Longleftrightarrow \\
\pi_1(S)\subseteq L_1 \ \text{and} \ y\in \bigcap_{n\in \pi_2(S)}\lambda_{n}^{-1}(L_2).
\end{align*}
If $\pi_1(S)\nsubseteq L_1$, then $g^{-1}(\boxa (L_1\times L_2))=\emptyset$. Otherwise
\begin{align*}
g^{-1}(\boxa (L_1\times L_2))&=\{(Z,y)\in \V(X\times Y)\times Y\mid Z\subseteq  \lambda_{m_1}^{-1}(L_1)\times L_2, \, y\in \bigcap_{n_2\in \pi_2(S)}
\lambda_{n_2}^{-1}(L_2)\} \\
&=\left(\bigcap_{n\in \pi_2(S)} \lambda_{n}^{-1}(L_2)\right)\times\left(\boxa (\lambda_{m_1}^{-1}(L_1)\times L_2) \right),
\end{align*}
exhibiting $g^{-1}(\boxa (L_1\times L_2))$ as a clopen. On the other hand,
\begin{align*}
g^{-1}(\Dv (L_1\times L_2))&=\{(Z,y)\in \V(X\times Y)\times Y\mid (m_1 Z\cup S y)\cap (L_1\times L_2)\neq\emptyset\} \\
&=(\V(X\times Y)\times \{y\mid S y\cap (L_1\times L_2)\neq\emptyset\})\cup (\{Z\mid m_1 Z\cap(L_1\times L_2)\neq \emptyset\}\times Y).
\end{align*}
We remark that
\begin{align*}
m_1 Z\cap (L_1\times L_2)\neq\emptyset &\Longleftrightarrow \\
\{(\lambda_{m_1}(x),y)\in X\times Y\mid (x,y)\in Z\} \cap (L_1\times L_2)\neq\emptyset &\Longleftrightarrow \\
\exists (x,y)\in Z \ \text{s.t.} \ x\in\lambda_{m_1}^{-1}(L_1) \ \text{and} \ y\in L_2 &\Longleftrightarrow \\
Z\in \Dv(\lambda_{m_1}^{-1}(L_1)\times L_2)
\end{align*}
and
\begin{align*}
S y\cap (L_1\times L_2)\neq\emptyset &\Longleftrightarrow \\
\{(m,\lambda_{n}(y))\in X\times Y\mid (m,n)\in S\} \cap (L_1\times L_2)\neq\emptyset &\Longleftrightarrow \\
\exists (m,n)\in S \ \text{s.t.} \ m\in L_1 \ \text{and} \ y\in \lambda_{n}^{-1}(L_2) &\Longleftrightarrow \\
\pi_1(S)\cap L_1\neq\emptyset \ \text{and} \ y\in \bigcup_{n\in \pi_2(T)}\lambda_{n}^{-1}(L_2),
\end{align*}
where $T:=\pi_1^{-1}(\pi_1(S)\cap L_1)$.
Therefore
\begin{align*}
g^{-1}(\Dv (L_1\times L_2))&=\left(\V(X\times Y)\times\left(\bigcup_{n\in \pi_2(T)}\lambda_{n}^{-1}(L_2)\right)\right)\cup (\Dv(\lambda_{m_1}^{-1}(L_1)\times L_2)\times Y),
\end{align*}
showing $g^{-1}(\Dv (L_1\times L_2))$ as a clopen, and this completes the proof.
\end{proof}

\begin{proof}[Proof of Theorem \ref{th:reutenauer-global}]
Suppose that the languages $L_1, L_2$ are recognised by morphisms $\phi_1\colon (\beta(\Alp^*),\Alp^*)\to (X,M)$ and 
$\phi_2\colon (\beta(\Alp^*),\Alp^*)\to (Y,N)$ through the clopens $C_1\subseteq X$ and $C_2\subseteq Y$,
 respectively. For an arbitrarily fixed $a\in\Alp$, we will define a morphism $(\beta(\Alp^*),\Alp^*)\to(X \Ds Y,M \Ds N)$ recognising the language $L_1aL_2$.

 By abuse of notation, we denote $\phi_1\times \phi_2\colon \beta(\Alp^*\times \{a\}\times\Alp^*)\to X\times Y$ the unique continuous extension of the product map
$\Alp^*\times \{a\}\times\Alp^*\to X\times Y$ whose components are
\begin{align*}
(w,a,w')\mapsto \phi_1(w),\quad \text{and} \quad (w,a,w')\mapsto \phi_2(w').
\end{align*}
Let $\zeta_a\colon \beta(\Alp^*)\to \V(X\times Y)$ be the continuous function induced by the diagram
\[ \begin{tikzcd}[row sep=0.5em,column sep=3em]
 & \beta(\Alp^*\times \{a\}\times\Alp^*) \arrow{dl}[swap]{\beta c} \arrow{dr}{\phi_1\times \phi_2} & \\
 \beta(\Alp^*) & & X\times Y
\end{tikzcd} \]
just as for diagram~\eqref{eq:topo-span}, 
where $c\colon \Alp^*\times \{a\}\times\Alp^*\to \Alp^*$ is the concatenation map $(w,a,w')\mapsto waw'$.
We claim that the map $\zeta_a$ recognises the language $L_1aL_2$ through the clopen $\Dv(C_1\times C_2)$. Indeed, 
%
\begin{align*}
\zeta_a^{-1}(\Dv(C_1\times C_2))\cap\Alp^*&=\{w\in \Alp^*\mid ((\phi_1\times\phi_2)\circ(\beta c)^{-1}(\uparrow w))\cap (C_1\times C_2)\neq\emptyset\} \\
&=\{w\in \Alp^*\mid (\beta c)^{-1}(\uparrow w)\cap (\phi_1\times\phi_2)^{-1}(C_1\times C_2)\neq\emptyset\} \\
&=\{w\in \Alp^*\mid (\beta c)^{-1}(\uparrow w)\cap \overline{(L_1\times\{a\}\times L_2)}\neq\emptyset\} \\
&=\{w\in \Alp^*\mid \exists u\in L_1, \ \exists v\in L_2 \ \text{s.t.} \ w=uav\}=L_1 a L_2.
\end{align*}
 Therefore the continuous product map $\langle\zeta_a,\phi_1,\phi_2\rangle\colon\beta(\Alp^*)\to X \Ds Y$ recognises 
 the language $L_1aL_2$ through the clopen $\Dv(C_1\times C_2)\times X\times Y$. Moreover, the latter map induces a morphism 
 $(\beta(\Alp^*),\Alp^*)\to(X \Ds Y,M \Ds N)$ because $\phi_1,\phi_2$ restrict to monoid morphisms, and for all $w,w'\in\Alp^*$
 \begin{align*}
 \phi_1(w)\cdot \zeta_a(w')\cup \zeta_a(w)\cdot\phi_2(w')=&\ \phi_1(w)\cdot\{(\phi_1(u),\phi_2(v))\mid u,v\in\Alp^*, w'=uav\}\ \cup \\
 &\ \{(\phi_1(u),\phi_2(v))\mid u,v\in\Alp^*, w=uav\}\cdot\phi_2(w') \\
 =&\ \{(\phi_1(wu),\phi_2(v))\mid u,v\in\Alp^*, w'=uav\}\ \cup \\
 &\ \{(\phi_1(u),\phi_2(vw'))\mid u,v\in\Alp^*, w=uav\} \\
 =&\ \zeta_a(ww').
 \end{align*}
 We remark that the morphism $\langle\zeta_a,\phi_1,\phi_2\rangle\colon (\beta(\Alp^*),\Alp^*)\to(X \Ds Y,M \Ds N)$ recognises also the languages $L_1$ and
 $L_2$ through the clopens $\V(X\times Y)\times C_1\times Y$ and $\V(X\times Y)\times X\times C_2$.
%

For the converse direction, consider an arbitrary morphism 
\begin{align*}
\langle\zeta,\phi_1,\phi_2\rangle\colon (\beta(\Alp^*),\Alp^*)\to(X \Ds Y,M \Ds N).
\end{align*}
It suffices to show that the language $\zeta^{-1}(\Dv(C_1\times C_2))\cap\Alp^*$ belongs to the Boolean algebra $\mathcal{L}$, for arbitrary clopens $C_1\subseteq X$ 
and $C_2\subseteq Y$.
We shall need the following
\begin{claim*}\label{claim:languages-C1-C2-a-are-in-L}
If $a\in\Alp$ and $C_1$ and $C_2$ are clopens of $X$ and $Y$, respectively, then
\begin{align*}
L_{C_1\times C_2,a}:=\{w\in \Alp^*\mid \exists u,v\in \Alp^* \ \text{s.t.} \ w=uav \ \text{and} \ \phi_1(u)\zeta(a)\phi_2(v)\in \Dv (C_1\times C_2)\}
\end{align*}
belongs to the Boolean algebra $\mathcal{L}$.
\end{claim*}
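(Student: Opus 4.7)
The plan is to unpack $\phi_1(u)\zeta(a)\phi_2(v)$ using the product formulas in $\Ds(X,Y)$ and to exhibit $L_{C_1\times C_2,a}$ explicitly as a finite union of languages of the form $L_1\cdot\{a\}\cdot L_2$, with $L_1$ recognised by $X$ and $L_2$ by $Y$. The crucial structural observation is that, since $\langle\zeta,\phi_1,\phi_2\rangle$ is a morphism of Boolean spaces with internal monoids, it restricts to a monoid morphism $\Alp^*\to \Ds(M,N)$; in particular, for every $a\in\Alp$ the value $\zeta(a)$ --- a priori an element of $\V(X\times Y)$ --- is actually a finite subset of $M\times N$.

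First I would carry out the computation of the first coordinate of the product
$$(\emptyset,\phi_1(u),\phi_2(u))\cdot(\zeta(a),\phi_1(a),\phi_2(a))\cdot(\emptyset,\phi_1(v),\phi_2(v))$$
in $\Ds(M,N)$. A straightforward double application of the product formula $(S,m_1,n_1)\cdot(T,m_2,n_2)=(m_1 T\cup S n_2,\,m_1 m_2,\,n_1 n_2)$ yields
$$\phi_1(u)\zeta(a)\phi_2(v) \;=\; \{(\phi_1(u)\cdot m,\ n\cdot\phi_2(v)) \mid (m,n)\in \zeta(a)\}.$$
Consequently, $\phi_1(u)\zeta(a)\phi_2(v)\in\Dv(C_1\times C_2)$ if and only if there exists $(m,n)\in\zeta(a)$ satisfying $\phi_1(u)\in\rho_m^{-1}(C_1)$ and $\phi_2(v)\in\lambda_n^{-1}(C_2)$. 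Since the components of the biactions are continuous, both $\rho_m^{-1}(C_1)\subseteq X$ and $\lambda_n^{-1}(C_2)\subseteq Y$ are clopen.

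Combining these observations yields the identity
$$L_{C_1\times C_2,a} \;=\; \bigcup_{(m,n)\in\zeta(a)} \phi_1^{-1}(\rho_m^{-1}(C_1))\cdot\{a\}\cdot \phi_2^{-1}(\lambda_n^{-1}(C_2)),$$
which is a \emph{finite} union --- this is precisely where the finiteness of $\zeta(a)$ is used --- of languages of the form $L_1 a L_2$ with $L_1$ recognised by $\phi_1$ (hence by $X$) and $L_2$ recognised by $\phi_2$ (hence by $Y$). Each such term is a generator of $\mathcal{L}$, so $L_{C_1\times C_2,a}\in\mathcal{L}$ as claimed.

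The only real obstacle is careful bookkeeping in the unfolding of the product in the first coordinate of $\Ds(X,Y)$; no topological argument such as compactness or Vietoris-level reasoning is needed, because the reduction to a finite union takes place already at the level of the single letter $a$, where the Sch\"{u}tzenberger value lives in $\Pfin(M\times N)$ rather than in the full Vietoris space.
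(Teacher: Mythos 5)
Your proof is correct and follows essentially the same route as the paper's: both unfold $\phi_1(u)\zeta(a)\phi_2(v)$ as $\{(\phi_1(u)\cdot m,\ n\cdot\phi_2(v))\mid (m,n)\in\zeta(a)\}$, use the finiteness of $\zeta(a)\in\Pfin(M\times N)$ together with continuity of the action components to identify $L_{C_1\times C_2,a}$ with the finite union $\bigcup_{(m,n)\in\zeta(a)}\bigl(\phi_1^{-1}(\rho_m^{-1}(C_1))\cap\Alp^*\bigr)\,a\,\bigl(\phi_2^{-1}(\lambda_n^{-1}(C_2))\cap\Alp^*\bigr)$ of generators of $\mathcal{L}$. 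The paper merely enumerates $\zeta(a)=\{(m_1,n_1),\ldots,(m_s,n_s)\}$ where you index the union by its elements; the arguments are otherwise identical.
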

\begin{proof}[Proof of Claim]
Since $\zeta(a)\in \Pfin(M\times N)$, there is $s\in\nbb$ such that
\begin{align*}
\zeta(a)=\{(m_1,n_1),\ldots,(m_s,n_s)\}
\end{align*}
for some $\{m_i\}_{i=1}^{s}\subseteq M$ and $\{n_i\}_{i=1}^{s}\subseteq N$. We show that
\begin{align}\label{eq:finite-union-in-L}
L_{C_1\times C_2,a}=\bigcup_{i=1}^s A_i a B_i
\end{align}
where $A_i:=\phi_1^{-1}(\rho_{m_i}^{-1}(C_1))\cap\Alp^*$ and $B_i:=\phi_2^{-1}(\lambda_{n_i}^{-1}(C_2))\cap\Alp^*$ 
(recall that $\rho_{m_i}$ is the continuous component of the right action 
of $M$ on $X$, and $\lambda_{n_i}$ is the continuous component of the left action of $N$ on $Y$). This will settle the claim.

Pick $w\in \Alp^*$. Then $w\in L_{C_1\times C_2,a}$ if, and only if, there exist $u,v\in \Alp^*$ with $w=uav$ and $\phi_1(u)\zeta(a)\phi_2(v)\in\Dv (C_1\times C_2)$ if,
and only if, $w=uav$ and there is $i\in \{1,\ldots, s\}$ such that 
\begin{align*}
(\phi_1(u)\cdot m_i,n_i\cdot \phi_2(v))=\phi_1(u)\cdot (m_i,n_i)\cdot \phi_2(v)\in C_1\times C_2,
\end{align*}
i.e.\ $u\in \phi_1^{-1}(\rho_{m_i}^{-1}(C_1))\cap\Alp^*$ and $v\in \phi_2^{-1}(\lambda_{n_i}^{-1}(C_2))\cap\Alp^*$. 
In turn, this is equivalent to $w\in\bigcup_{i=1}^s A_i a B_i$ and (\ref{eq:finite-union-in-L}) is proved.
\end{proof}
Now, as observed in \cite[p.\ 261]{Reutenauer1979}, for any $w\in \Alp^*$
\begin{align*}
\zeta(w)=\bigcup_{\substack{u,v\in \Alp^* \\ a\in \Alp \\ w=uav}}\phi_1(u)\zeta(a)\phi_2(v).
\end{align*}
Thus $w\in \zeta^{-1}(\Dv(C_1\times C_2))\cap\Alp^*$ if, and only if, there are $u,v\in\Alp^*$ and $a\in \Alp$ such that 
$w=uav$ and $\phi_1(u)\zeta(a)\phi_2(v)\in\Dv (C_1\times C_2)$. Therefore
\begin{align*}
\zeta^{-1}(\Dv(C_1\times C_2))\cap\Alp^*=\bigcup_{a\in\Alp} L_{C_1\times C_2,a}
\end{align*}
which, by the claim, exhibits $\zeta^{-1}(\Dv(C_1\times C_2))\cap\Alp^*$ as a finite union of elements of 
$\mathcal{L}$.
\end{proof}
\section{Addenda to Section \ref{s:ultrafilter-equations}}\label{a:appendix-ultrafilter-equations}
We recall that a subset $\mathscr{S}$ of a Boolean algebra $(\B,\wedge,\vee,\neg,0,1)$ is a \emph{filter base} if it has the finite intersection property, 
that is $L_1\wedge\cdots\wedge L_n\neq 0$ for any $L_1,\ldots,L_n\in \mathscr{S}$.
%
%
%
%
\begin{proof}[Proof of Theorem \ref{t:ultrafilter-equations-completeness}]
We first prove \emph{soundness}, i.e.\ every element of $\B\Dplus 2$ satisfies the set of ultrafilter equations $\mathcal E(\B\Dplus 2)$.
It is enough to check that, for any $L\in\B$, $a\in\Alp$ and $\mu\approx \nu\in \mathcal E(\B\Dplus 2)$, the language $La\Alp^*$ belongs to $\nu$ whenever 
it belongs to $\mu$.
By applying Lemma \ref{l:ultrafiter-image-quasi-inverse} with $\mathscr{F}:=\{L\}$, the condition $La\Alp^*\in\mu$ entails that there exists 
$\gamma\in\beta(\Alp^*\otimes \nbb)$ such that $\mu=\beta f_a(\gamma)$ and $L\in \beta f_r(\gamma)$.
Then, by hypothesis, there is $\delta\in\beta(\Alp^*\otimes\nbb)$ satisfying $\nu=\beta f_a(\delta)$ and $L\in\beta f_r(\delta)$.
Hence $La\Alp^*\in\nu$ by Lemma \ref{l:ultrafiter-image}.
%

Now, we prove \emph{completeness}: every language $K\in\P(\Alp^*)$ satisfying all the equations in $\mathcal E(\B\Dplus 2)$ must belong to $\B\Dplus 2$. 
Let us denote the dual map of the embedding $\B\hookrightarrow \P(\Alp^*)$ by $\phi\colon \beta(\Alp^*)\to X$, and for any ultrafilter $\mu\in\widehat{K}$ set
\begin{align*}
C_{\mu}:=\phi^{-1}(\phi(\mu))&\cap \bigcap\{\widehat{La\Alp^*}\mid a\in\Alp, L\in\B, La\Alp^*\in\mu\} \\
&\cap \bigcap\{(\widehat{La\Alp^*})^c \mid a\in\Alp, L\in\B, La\Alp^*\notin\mu\}.
\end{align*}
\begin{claim*}
Let $K\in\P(\Alp^*)$. Then $K\in\B\emph{\Dplus} 2$ if, and only if, $C_{\mu}\subseteq \widehat{K}$ for all $\mu\in\widehat{K}$.
\end{claim*}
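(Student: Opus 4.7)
The plan is to read the Claim through the lens of Stone duality. Let $\psi\colon\beta(\Alp^*)\twoheadrightarrow Y$ denote the continuous surjection dual to the embedding $\B\Dplus 2\hookrightarrow\P(\Alp^*)$, where $Y$ is the dual space of $\B\Dplus 2$. By Stone duality, the clopens of $Y$ correspond, via $\psi^{-1}$, to exactly the elements of $\B\Dplus 2$, so a language $K\in\P(\Alp^*)$ lies in $\B\Dplus 2$ if and only if the clopen $\widehat{K}\subseteq\beta(\Alp^*)$ is $\psi$-saturated, i.e.\ contains the fibre $\psi^{-1}(\psi(\mu))$ whenever it contains $\mu$. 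The heart of the proof is therefore to show that $C_\mu$ is precisely this fibre.

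First I would verify $C_\mu=\psi^{-1}(\psi(\mu))$. The algebra $\B\Dplus 2$ is generated, as a Boolean algebra, by $\mathcal{G}:=\B\cup\{La\Alp^*\mid L\in\B,\ a\in\Alp\}$, so two ultrafilters of $\P(\Alp^*)$ restrict to the same ultrafilter of $\B\Dplus 2$ iff they agree on every element of $\mathcal{G}$. The three intersections defining $C_\mu$ encode exactly this: since an ultrafilter of $\B$ is determined by its positive part, $\phi^{-1}(\phi(\mu))=\bigcap\{\widehat{L}\mid L\in\B\cap\mu\}$ captures the trace of $\mu$ on $\B$, while the remaining two intersections separately enforce which generators $La\Alp^*$ are, and which are not, in $\mu$. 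The negative constraints are genuinely needed, because unlike in $\B$, the complement of $La\Alp^*$ is not itself of the form $L'a\Alp^*$.

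The forward implication is then immediate: if $K\in\B\Dplus 2$ then $\widehat{K}=\psi^{-1}(V)$ for some clopen $V\subseteq Y$, so for each $\mu\in\widehat{K}$ the fibre $C_\mu=\psi^{-1}(\psi(\mu))$ lies in $\psi^{-1}(V)=\widehat{K}$. For the converse, assume $C_\mu\subseteq\widehat{K}$ for every $\mu\in\widehat{K}$, fix such a $\mu$, and note that $C_\mu\cap(\widehat{K})^c=\emptyset$. Writing $C_\mu$ as the intersection of the clopens $\widehat{L}$ with $L\in\mathcal{G}\cap\mu$ and $(\widehat{L})^c$ with $L\in\mathcal{G}\setminus\mu$, compactness of $\beta(\Alp^*)$ extracts a finite subfamily whose intersection with $(\widehat{K})^c$ is empty; combining its members by Boolean operations in $\P(\Alp^*)$ produces some $L_\mu\in\B\Dplus 2$ with $L_\mu\in\mu$ and $\widehat{L_\mu}\subseteq\widehat{K}$. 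As $\mu$ ranges over $\widehat{K}$ the $\widehat{L_\mu}$ form an open cover of the compact set $\widehat{K}$; a finite subcover $\widehat{L_{\mu_1}},\ldots,\widehat{L_{\mu_n}}$ then exhausts $\widehat{K}$, whence $K=L_{\mu_1}\cup\cdots\cup L_{\mu_n}\in\B\Dplus 2$.

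The only mildly subtle step is the identification of $C_\mu$ with the fibre $\psi^{-1}(\psi(\mu))$, which rests on the (routine) observation that ultrafilters of $\B\Dplus 2$ are determined by their restrictions to the generating set $\mathcal{G}$; beyond that, everything is standard Stone-duality bookkeeping, with compactness invoked twice, just as the sketch in the paper advertises.
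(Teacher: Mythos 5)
Your proof is correct and follows essentially the same route as the paper's: the identification of $C_\mu$ with the set of ultrafilters agreeing with $\mu$ on the generators of $\B\Dplus 2$ (equivalently, the fibre of the dual quotient map) gives the forward direction, and the converse is the same double application of compactness, first to shrink $C_\mu$ to a finite intersection $D_\mu$ with $D_\mu\cap\Alp^*\in\B\Dplus 2$, then to extract a finite cover of $\widehat{K}$. The only difference is presentational: you phrase the forward direction via $\psi$-saturation where the paper phrases it via satisfaction of the equation $\mu\approx\nu$, which is the same statement.
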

\begin{proof}[Proof of Claim]
Let $\mu$ be an arbitrary element of $\widehat{K}$, and assume that $C_{\mu}\subseteq \widehat{K}$. Then
\begin{align*}
\phi^{-1}(\phi(\mu))=\bigcap\{\widehat{H}\mid H\in\B,\ H\in\mu\}.
\end{align*}
By compactness there are $H_1,\ldots,H_h,L_1,\ldots,L_l,M_1,\ldots,M_m\in\B$ such that
\begin{align*}
D_{\mu}:=(\bigcap_{i=1}^h\widehat{H_i})\cap(\bigcap_{i=1}^l \widehat{L_ia_i\Alp^*})\cap(\bigcap_{i=1}^m(\widehat{M_ia'_i\Alp^*})^c)\subseteq \widehat{K}.
\end{align*}
Then $D_{\mu}$ is a clopen containing $\mu$, and $L_{\mu}:=D_{\mu}\cap \Alp^*\in\B\Dplus 2$. Moreover $\widehat{L_{\mu}}=D_{\mu}\subseteq \widehat{K}$, hence
$\widehat{K}=\bigcup_{\mu\in\widehat{K}}\widehat{L_{\mu}}$ since $\mu$ is arbitrary. Again by compactness there are $\mu_1,\ldots,\mu_n\in\widehat{K}$ such that 
$\widehat{K}=\bigcup_{i=1}^n\widehat{L_{\mu_i}}$. Thus $K\in\B\Dplus 2$.

For the converse direction, pick $\nu\in C_{\mu}$, for some $\mu\in\widehat{K}$. 
Then $\B\Dplus 2$ satisfies the equation $\mu\approx\nu$. Since $K\in\B\Dplus 2$ and $\mu\in\widehat{K}$, we have $K\in\nu$, 
i.e.\ $\nu\in\widehat{K}$.
\end{proof}
In view of the previous claim it is enough to fix an arbitrary $\mu\in\widehat{K}$ and show that $C_{\mu}\subseteq \widehat{K}$. 
Pick $\nu\in C_{\mu}$ and notice that it suffices to prove 
$\mu\approx\nu\in \mathcal E(\B\Dplus 2)$, for then $\mu\in\widehat{K}$ entails $\nu\in\widehat{K}$, since $K$ is assumed to
satisfy all equations in $\mathcal E(\B\Dplus 2)$.

Clearly, $\nu\in\phi^{-1}(\phi(\mu))$ entails that $\mu\approx\nu$ holds in $\B$.
For the second condition in Definition \ref{d:eq}, suppose that $\mu=\beta f_a(\gamma)$ for some $\gamma\in\beta(\Alp^*\otimes \nbb)$, and consider the collection
\begin{align*}
\mathscr{S}:=\{L\mid L\in\B,\ L\in\beta f_r(\gamma)\}.
\end{align*}
Then $La\Alp^*\in\mu$ for every $L\in\mathscr{S}$, by Lemma \ref{l:ultrafiter-image}. Moreover, since $\mu\approx\nu$ holds in $\B$, $La\Alp^*\in\nu$ for all 
$L\in\mathscr{S}$. Since $\mathscr{S}$ is a filter base closed under finite intersections, (upon considering the proper filter generated by $\mathscr{S}$) Lemma 
\ref{l:ultrafiter-image-quasi-inverse} entails the existence of
$\delta\in\beta(\Alp^*\otimes\nbb)$ such that $\nu=\beta f_a(\delta)$ and $\mathscr{S}\subseteq \beta f_r(\delta)$.
Notice that $\mathscr{S}=\phi(\beta f_r(\gamma))$, thus $\phi(\beta f_r(\gamma))=\phi(\beta f_r(\delta))$, that is $\B$ satisfies the equation 
$\beta f_r(\gamma)\approx\beta f_r(\delta)$.

The third condition can be proved in a similar fashion.
\end{proof}
\begin{proof}[Proof of Lemma \ref{l:ultrafiter-image}]
Recall from~\eqref{eq:beta-on-maps} that the condition $L\in \beta f_r(\gamma)$ means $f_r^{-1}(L)\in \gamma$. Moreover
\begin{align*}
f_r^{-1}(L)=\{(w,i)\in\Alp^*\otimes\nbb\mid w_{|i}\in L\}\subseteq \{(w,i) \in\Alp^*\otimes\nbb\mid w(a\text{@}i)\in La\Alp^*\}=f_a^{-1}(La\Alp^*)
\end{align*}
so that $f_a^{-1}(La\Alp^*)\in \gamma$, i.e.\ $La\Alp^*\in \beta f_a(\gamma)=\mu$.
\end{proof}
\begin{proof}[Proof of Lemma \ref{l:ultrafiter-image-quasi-inverse}]
It suffices to show that the collection
\begin{align*}
\{f_a^{-1}(K)\cap f_r^{-1}(L)\mid K\in\mu,\ L\in\mathscr{F}\}
\end{align*}
is a filter base, for then any ultrafilter extending this base will satisfy the conditions in the statement. 
Furthermore, since $\mu$ and $\mathscr{F}$ are closed under finite 
intersections, it is enough to show that each set $f_a^{-1}(K)\cap f_r^{-1}(L)$ is not empty.
 
Since $La\Alp^*\in\mu$ by hypothesis, the intersection $K\cap La\Alp^*$ is non-empty because it belongs to $\mu$. Thus there exists $w\in K$ and $0\leq i<|w|$ such that
$w_{|i}\in L$ and $w_i=a$. That is, $(w,i)\in f_a^{-1}(K)\cap f_r^{-1}(L)$.
\end{proof}

\begin{thebibliography}{10}

\bibitem{AdamekMUM15}
J.~Ad{\'{a}}mek, R.~Myers, H.~Urbat, and S.~Milius.
\newblock Varieties of languages in a category.
\newblock In {\em {LICS}}, pages 414--425. {IEEE}, 2015.

\bibitem{BoBoHaPaSi2014}
F.~Bonchi, M.~Bonsangue, H.~Hansen, P.~Panangaden, J.~Rutten, and A.~Silva.
\newblock Algebra-coalgebra duality in {B}rzozowski's minimization algorithm.
\newblock {\em ACM Trans. Comput. Logic}, 15(1):3:1--3:29, 2014.

\bibitem{BP2009}
M.~Branco and J.-{\'E.} Pin.
\newblock {Equations defining the polynomial closure of a lattice of regular
  languages}.
\newblock In Albers et~al, editor, {\em {ICALP 2009}}, volume 5556 of {\em
  Lecture Notes In Computer Science}, pages 115--126. {Springer-Verlag}, 2009.

\bibitem{CK2016}
S.~Czarnetzki and A.~Krebs.
\newblock Using duality in circuit complexity.
\newblock {\em CoRR}, abs/1510.04849, 2015.
\newblock To appear in LATA 2016.

\bibitem{Eilenberg2}
S.~Eilenberg.
\newblock {\em Automata, languages, and machines. {V}ol. {B}}.
\newblock Academic Press, New York-London, 1976.

\bibitem{Gehrke2016}
M.~Gehrke.
\newblock Stone duality, topological algebra, and recognition.
\newblock {\em {J. Pure and Appl. Algebra}}, 2016.

\bibitem{GGP2008}
M.~Gehrke, S.~Grigorieff, and J.-{\'E}. Pin.
\newblock Duality and equational theory of regular languages.
\newblock In {\em Automata, languages and programming {II}}, volume 5126 of
  {\em Lecture Notes in Comput. Sci.}, pages 246--257. Springer, Berlin, 2008.

\bibitem{GGP2010}
M.~Gehrke, S.~Grigorieff, and J.-{\'E}. Pin.
\newblock A topological approach to recognition.
\newblock In {\em Automata, languages and programming {II}}, volume 6199 of
  {\em Lecture Notes in Comput. Sci.}, pages 151--162. Springer, Berlin, 2010.

\bibitem{GKP2016}
M.~Gehrke, A.~Krebs, and J.-{\'E}. Pin.
\newblock Ultrafilters on words for a fragment of logic.
\newblock {\em Theoret. Comput. Sci.}, 610(part A):37--58, 2016.

\bibitem{HS2012}
N.~Hindman and D.~Strauss.
\newblock {\em Algebra in the {S}tone-\v {C}ech compactification}.
\newblock de Gruyter, 2012.

\bibitem{KLR2007}
A.~Krebs, K.-J. Lange, and S.~Reifferscheid.
\newblock Characterizing {TC$^0$} in terms of infinite groups.
\newblock {\em Theory Comput. Syst.}, 40(4):303--325, 2007.

\bibitem{Kuratowski1}
K.~Kuratowski.
\newblock {\em Topology. {V}ol. {I}}.
\newblock New edition, revised and augmented. Translated from the French by J.
  Jaworowski. Academic Press, New York-London; Pa\'nstwowe Wydawnictwo Naukowe,
  Warsaw, 1966.

\bibitem{MP1971}
R.~McNaughton and S.~Papert.
\newblock {\em Counter-free automata}.
\newblock The M.I.T. Press, Cambridge, Mass.-London, 1971.
\newblock With an appendix by William Henneman, M.I.T. Research Monograph, No.
  65.

\bibitem{Michael1951}
E.~Michael.
\newblock Topologies on spaces of subsets.
\newblock {\em Trans Amer. Math. Soc.}, 71:152--182, 1951.

\bibitem{Pin83}
J.-{\'E}. Pin.
\newblock Arbres et hierarchies de concatenation.
\newblock In {\em {ICALP}}, volume 154 of {\em Lecture Notes in Computer
  Science}, pages 617--628. Springer, 1983.

\bibitem{Pin2003}
J.-{\'E}. Pin.
\newblock Algebraic tools for the concatenation product.
\newblock {\em Theoretical Computer Science}, 292(1):317 -- 342, 2003.
\newblock Selected Papers in honor of Jean Berstel.

\bibitem{PW1996}
J.-{\'E.} Pin and P.~Weil.
\newblock Profinite semigroups, {M}alcev products, and identities.
\newblock {\em J. of Algebra}, 182(3):604 -- 626, 1996.

\bibitem{Reutenauer1979}
C.~Reutenauer.
\newblock {\em Theoretical Computer Science 4th GI Conference: Aachen}, chapter
  Sur les varietes de langages et de mono{\"\i}des, pages 260--265.
\newblock Springer, 1979.

\bibitem{Schutzenberger65}
M.-P. Sch{\"u}tzenberger.
\newblock On finite monoids having only trivial subgroups.
\newblock {\em Information and Control}, 8(2):190--194, 1965.

\bibitem{Stone1936}
M.~H. Stone.
\newblock The theory of representations for {B}oolean algebras.
\newblock {\em Trans. Amer. Math. Soc.}, 40(1):37--111, 1936.

\bibitem{Straubing1981}
H.~Straubing.
\newblock A generalization of the {S}ch\"utzenberger product of finite monoids.
\newblock {\em Theoret. Comput. Sci.}, 13(2):137--150, 1981.

\bibitem{Straubing1994}
H.~Straubing.
\newblock {\em Finite Automata, Formal Logic, and Circuit Complexity}.
\newblock Birkhauser, 1994.

\bibitem{Tilson1987}
B.~Tilson.
\newblock Categories as algebra: an essential ingredient in the theory of
  monoids.
\newblock {\em J. Pure Appl. Algebra}, 48(1-2):83--198, 1987.

\end{thebibliography}
\end{document}